\theoremstyle{plain}
\newtheorem{theorem}{Theorem}[section]
\newtheorem{lemma}[theorem]{Lemma}
\newtheorem{corollary}[theorem]{Corollary}
\newtheorem{definition}[theorem]{Definition}
\newtheorem{assumption}[theorem]{Assumption}
\newtheorem*{remark}{Remark}
\newcommand{\indep}{\rotatebox[origin=c]{90}{$\models$}}
\newcommand{\R}{\mathbb{R}}
\DeclareMathOperator*{\argmin}{arg\,min}
\newcommand{\pa}{\mathrm{\pa}}
\newcommand{\RN}[1]{%
  \textup{\uppercase\expandafter{\romannumeral#1}}%
}
\title{CATE Lasso: Conditional Average Treatment Effect Estimation\\ with High-Dimensional Linear Regression}
\begin{document}

\author[1]{Masahiro Kato}
\author[1,2]{Masaaki Imaizumi}

\affil[1]{Department of Basic Science, The University of Tokyo}
\affil[2]{RIKEN Center for Advanced Intelligence Project}

\maketitle
\thispagestyle{empty}

\begin{abstract}
In causal inference about two treatments, \emph{Conditional Average Treatment Effects} (CATEs) play an important role as a quantity representing an individualized causal effect, defined as a difference between the expected outcomes of the two treatments conditioned on covariates. This study assumes two linear regression models between a potential outcome and covariates of the two treatments and defines CATEs as a difference between the linear regression models. Then, we propose a method for consistently estimating CATEs even under high-dimensional and non-sparse parameters. In our study, we demonstrate that desirable theoretical properties, such as consistency, remain attainable even without assuming sparsity explicitly if we assume a weaker assumption called \emph{implicit sparsity} originating from the definition of CATEs. In this assumption, we suppose that parameters of linear models in potential outcomes can be divided into treatment-specific and common parameters, where the treatment-specific parameters take difference values between each linear regression model, while the common parameters remain identical. Thus, in a difference between two linear regression models, the common parameters disappear, leaving only differences in the treatment-specific parameters. Consequently, the non-zero parameters in CATEs correspond to the differences in the treatment-specific parameters. Leveraging this assumption, we develop a Lasso regression method specialized for CATE estimation and present that the estimator is consistent. Finally, we confirm the soundness of the proposed method by simulation studies.
\end{abstract}

\section{Introduction} 
Estimating the causal effects of binary treatment from observations is a central task in various fields, such as economics \citep{Wager2018}, medicine \citep{Assmann2000}, and online advertisement \citep{Bottou13}. Specifically, we consider a case where there is a binary treatment and investigate conditional average treatment effects \citep[CATEs,][]{hahn1998role,Heckman1997,Abrevaya2015}, defined as a difference of the expected values of binary treatments' scalar outcomes conditioned on covariates. CATEs have garnered attention as a quantity that captures the heterogeneity among individuals' treatment effects in the population. 

Estimating CATEs requires regression models that approximate the conditional expectation of outcomes.
This study assumes two linear regression models between a potential outcome and covariates of each treatment. Then, CATEs are defined as a difference of the two linear regression models. Under this setting, our interest lies in consistently estimating CATEs when the two linear regression models have high-dimensional and non-sparse parameters.

Estimating parameters in high-dimensional regression models is usually challenging, especially when the dimension is larger than the sample size. The solutions of least squares may lack desirable properties such as consistency, typically held in low-dimensional models under mild conditions. Various estimation and inference approaches have been proposed for high-dimensional linear regression models. This study focuses on linear regression with sparsity using the Lasso \citep{tibshirani96regression, zhao06a, vandeGeer2008}. 

We first formulate our problem using the Neyman-Rubin causal models \citep{Neyman1923, Rubin1974}, which define a potential outcome for each treatment. We observe one of the potential outcomes corresponding to our actual treatment. For each treatment, between the potential outcome and covariates, we assume linear regression models; that is, there are two linear models corresponding to binary treatments. Then, we define CATEs as a difference between the two linear models.

For the linear models of each treatment, we assume that parameters are separable into treatment-specific and common parameters. While the treatment-specific parameters take different values between linear models in each treatment, the common parameters are the same. Therefore, when taking the difference between two linear regression models for each binary treatment, the common parameters disappear, and only the treatment-specific parameters remain. Therefore, even under high-dimensional and non-sparse linear regression models, the total dimension of CATE linear models depends only on that of the treatment-specific parameters. Hence, if the dimension of the treatment-specific parameters is low, we can employ properties similar to ones obtained under sparsity. Because we do not explicitly assume sparsity for linear regression models in potential outcome level, we refer to this sparsity as \emph{implicit sparsity}.

By utilizing this implicit sparsity, we propose the Lasso-based regression specialized for CATE estimation, referred to as the \emph{CATE Lasso}. The CATE Lasso regularizes the parameters by adding an $\ell_1$-norm for differences of the parameters, while the standard Lasso regularizes the parameters themselves. Surprisingly, even if linear models are high-dimension and non-sparse in a potential outcome level, we can still show consistency by utilizing the assumption. Furthermore, this assumption includes a case where linear models in potential outcomes are sparse. Thus, we develop a regression method for CATE estimation with high-dimensional and non-sparse liner models in a potential outcome level under the implicit sparsity assumption.

In summary, our contribution lies in the proposal of the Lasso method specialized for CATE estimation. Our method allows us to estimate the CATE without assuming the sparsity for linear model in a potential outcome level. We also show the consistency of our proposed estimator. Surprisingly, although we cannot show consistency for each linear model of each treatment, we can show consistency for the CATE defined as a difference of the linear models. Our method does not require nuisance estimators such as the propensity score and conditional mean function as in the inverse probability weighting (IPW) and doubly robust (DR) estimators. For that points, our method has advantages compared to the IPW and DR-based methods.

\textbf{Organization.} 
The structure of this study is organized as follows. Initially, we formulate our problem in Section~\ref{sec:problem} and summarize the notations in Section~\ref{sec:notation}. Section~\ref{sec:high} is dedicated to defining our high-dimensional linear regression models with implicit sparsity. For these linear regression models, we introduce Lasso-based estimators in Section~\ref{sec:cateLasso}, referred to as the CATE Lasso estimator. Subsequently, Section~\ref{sec:theoretical} provides theoretical results for the CATE Lasso estimator. Experimental results that confirm the validity of our proposed estimators are showcased in Section~\ref{sec:exp}. In Section~\ref{sec:related}, we introduce related work.

\section{Problem Setting}
\label{sec:problem}
In this section, we define our problem setting. Our formulation is based on the Neyman-Rubin potential outcome framework \citep{Neyman1923,Rubin1974}, which defines potential outcomes and observations, separately. 

\subsection{Potential Outcomes}
Suppose that there are two treatments $\mathcal{D} := \{1, 0\}$. In a typical situation, treatment $d=1$ corresponds to the active treatment, and treatment $d=0$ corresponds to the control treatment. For example, in a clinical trial, $d=1$ corresponds to a new drug, and $d=0$ corresponds to the placebo. We then posit the existence of potential outcome random variables $Y^1, Y^0 \in\mathbb{R}$ corresponding respectively to the treatment $d=1$ and $d=0$. Additionally, we assume that there are $p$-dimensional covariates $X\in\mathcal{X}\subset\mathbb{R}^p$, where $\mathcal{X}$ is the covariate space.

\subsection{CATE}
Let $P := (P^1, P^0)$ denote a set of distributions of $P^1$ and $P^0$, which are joint distributions of $(Y^1, X)$ and $(Y^0, X)$. Let $\mathbb{E}_P[\cdot]$ be an expectation under $P$. 

In this study, our interest lies in the CATE at $X = x \in \mathcal{X}$ defined as
\begin{align*}
    f_P(x) = \mathbb{E}_{P}[Y^1 - Y^0 | X = x] 
\end{align*}
This quantity has been widely used in empirical studies of various fields, such as epidemiology, economics, and political science, because it captures heterogeneous treatment effects for each individual represented by a characteristic $x\in\mathcal{X}$. 

\subsection{Observations}
Although there exist $Y^1$ and $Y^0$ as potential outcomes, we can only observe one of the outcomes, corresponding to an actual treatment $D\in\mathcal{D}$. Let $D \in \mathcal{D}$ denote an actual treatment indicator. By using $Y^1$, $Y^0$, and $D$, we define an observed outcome $Y\in\mathbb{R}$ as 
\begin{align*}
Y = D Y^1 + (1 - D) Y^0,
\end{align*}
Here, if $D = 1$, we observe $Y = Y^1$; if $D = 0$, we observe $Y = Y^0$.

Then, we define our observations. Let $n$ be the sample size. For each $i \in [n] := \{1,2,\dots,n\}$, let $(Y_i, D_i, X_i)$ be an independent and identically distributed (i.i.d.) copy of $(Y, D, X)$.
Then, we suppose that the following $n$ samples are observable:
\begin{align*}
\left\{(Y_i, D_i, X_i)\right\}^n_{i=1}.
\end{align*}

Recall that we defined $P$ as a set of $P^1$ and $P^0$, which are joint distributions of potential outcomes and covariates, $(Y^1, X)$ and $(Y^0, X)$. Similarly, we denote a joint distribution of observations $(Y, D, X)$ by $Q$. For the data-generating process, let $P_0$ and $Q_0$ be sets of the true distributions that generates the observations $\left\{(Y_i, D_i, X_i)\right\}^n_{i=1}$. Let $f_{P_0}$ be $f_0$. 

Our goal is to estimate the CATE $f_0(x)$ by using $\left\{(Y_i, D_i, X_i)\right\}^n_{i=1}$. In particular, we aim to obtain a consistent estimator of the CATE $f_0(x)$ at a point $X = x$.

For identification of $\tau(x)$, we assume the \emph{unconfoundedness}.
\begin{assumption}[Unconfoundedness] \label{asmp:unconfounded}
The treatment indicator $D$ is independent of the potential outcomes for $\{Y^1, Y^0\}$ conditional on $X$:
\[\{Y^1, Y^0\}\indep ~D \ |\ X\]
\end{assumption}
Assumption \ref{asmp:unconfounded} expresses a setting wherein the assignment is independent of the output conditioned on the covariates.
This is the standard approach in treatment effect estimation \citep{Rosenbaum1983}.

\begin{assumption}[Overlap of assignment support] \label{asmp:coherent}
For some universal constant $0 < \varphi < 0.5$, we have
\begin{align*}
    \varphi < \mathbb{E}[D = 1| X ] < 1 - \varphi,\qquad \mathrm{a.s.}
\end{align*}
\end{assumption}
Assumption~\ref{asmp:coherent} allows us to avoid overlap in treatment assignments.
By this assumption, we can guarantee the identifiability of the assignment and the parameters \citep{imbens_rubin_2015}.

\section{Notation} 
\label{sec:notation}
Let $\mathcal{P} := \{1,\dots,p\}$. 
Let us define a ($n\times p$)-matrix 
$
   \bm{X} := \begin{pmatrix}
    X^\top_{1} \\
    X^\top_{2} \\
    \vdots \\
    X^\top_{n} 
\end{pmatrix} = \begin{pmatrix}
    X_{1, 1} & X_{1, 2} & \cdots & X_{1, p}\\
    X_{2, 1} & X_{2, 2} & \cdots & X_{2, p}\\
    \vdots & \vdots & \ddots & \vdots\\
    X_{n, 1} & X_{n, 2} & \cdots & X_{n, p}
\end{pmatrix}$,  and a $n$-dimensional vector $
\mathbb{D} := (D_1,D_2,...,D_n)^\top
$. 
Let $m^1 := \sum^n_{i=1}D_i$ and $m^0 := \sum^n_{i=1}( 1- D_i) = n - m_1$. 
For each $d\in\mathcal{D}$ and $k\in\mathcal{P}$, let us define $m^d$-dimensional column vectors $\widetilde{\mathbb{Y}}^d$ and $\widetilde{\mathbb{X}}^d_k$ as 
$
    \widetilde{\mathbb{Y}}^d = (Y_i)^\top_{i\in\{1,2,\dots, n\}| D_i=d}$, $
    \widetilde{\mathbb{X}}^d_k = (X_{i,k})^\top_{i\in\{1,2,\dots, n\}| D_i=d}
$. 
Let us also define a ($m^d \times p$)-matrix $\widetilde{\bm{X}}^d$ as $    \widetilde{\bm{X}}^d = \begin{pmatrix}
        \widetilde{\mathbb{X}}^d_1 & \widetilde{\mathbb{X}}^d_2 & \cdots & \widetilde{\mathbb{X}}^d_p
    \end{pmatrix}$. 
We also denote the ($i,j$)-element of $\widetilde{\bm{X}}^d$ as $\widetilde{X}^d_{i, j}$. For a vector $v = (v_1,\dots, v_p)\in\mathbb{R}^p$, we denote its $\ell_q$ norm by $\|v\|_q := (\sum^p_{i=1}|v_i|^q)$ for $q \geq 1$.

\section{High-Dimensional Linear Regression with Implicit Sparsity}
\label{sec:high}
This section introduces high-dimensional linear regression models with implicit sparsity.

\subsection{Potential High-Dimensional Linear Regression Models}
In this study, we consider a linear relationship between a potential outcome $Y^d$ ($d\in\mathcal{D}$) and covariates $X$. For each $d \in \mathcal{D}$ and $P$, we posit the following linear regression model:
\begin{align}
\label{eq:linear}
    Y^d = X^\top\bm{\beta}^d + \varepsilon^d, 
\end{align}
where $\bm{\beta}^d \in \mathbb{R}^p$ is a $p$-dimensional parameter, while $\varepsilon^d$ is an independent noise variable with a zero mean and finite variance. In our analysis, we make the following assumptions about $\varepsilon^d$. 
\begin{assumption}
\label{asm:eror}
For the error term $\epsilon^d$, $\mathbb{E}[\varepsilon^d | X] = 0$ holds a.s. Furthermore, we assume that $\varepsilon^1$ and $\varepsilon^0$ are independent each other. Additionally, the variance is 
finite; that is, $(\sigma^d_\varepsilon)^2 := \mathbb{E}[(\varepsilon^d)^2] < C < \infty$ for some universal constant $C$. 
\end{assumption}

Let $\varepsilon^d_i$ be an i.i.d. copy of $\varepsilon^d$, and $\bm{\varepsilon}^d$ be $(\varepsilon^d_1\ \cdots\ \varepsilon^d_n)^\top$.   

We also allow for our linear regression models to be high-dimensional; that is, $p > n$. In such a situation, a common approach to obtaining a consistent estimator is to assume that $\bm{\beta}^d$ has the \textit{sparsity}, that is, most of the elements of $\bm{\beta}^d$ is zero. 
However, this study does not assume sparsity directly to potential outcomes; instead, we leverage the property of CATE that emerges from the differentiation between $Y^1$ and $Y^0$. 

\subsection{Individual and Common Parameters}
\label{sec:ind_common}
Our key assumption for the CATE estimation is that the data generating models for $Y^1$ and $Y^0$ have several parameters in common.
These common parameters plays an important role for a sparsity-like property, without the explicit sparsity assumption. 

Specifically, we suppose that the parameter $\bm{\beta}^d_0$ can be separated into two parameters $\bm{\alpha}^d_0$ and $\bm{\gamma}_0$. 
\begin{assumption}[Separability]
\label{asm:separable}
For parameters in linear models in \eqref{eq:linear}, under the true distribution $P_0$, 
\begin{align}
    \bm{\beta}^d_0 = \left(\big(\bm{\alpha}^d_0\big)^{\top}, \bm{\gamma}^\top_0\right)^\top, \label{def:separability}
\end{align}
holds, where $\bm{\alpha}^d_0 \in\mathbb{R}^{s_0}$ is an $s_0$-dimensional vector defined for each $d \in \{1, 0\}$, and $\bm{\gamma}_0$ is a ($p - s_0$)-dimensional vector. 
\end{assumption}
We refer to $\bm{\alpha}^d_0 \in\mathbb{R}^{s_0}$ as an \emph{individual parameter} and $\bm{\gamma}_0$ as a \emph{common parameter}. 
We also give a covariate form $X=(W^\top, Z^\top)^\top$ with subvectors $W \in \R^{s_0}$ and $Z \in \R^{p - s_0}$, that correspond to $\bm{\alpha}^d_0$ and $\bm{\gamma}_0$, respectively.
Then, the linear regression model \eqref{eq:linear} is rewritten as 
\begin{align*}
    Y^d = W^\top\bm{\alpha}^d_0 + Z^\top \bm{\gamma}_0 + \varepsilon^d
\end{align*}
under $P_0$. 
We assume that we do not know which elements in covariates $X$ correspond to $W$ and $Z$. 

\begin{remark}
The separability assumption is a generalization of the traditional assumption that there are \emph{known} common terms in a regression model. Consider the following models under the separability assumption and $P_0$:
\begin{align*}
    &Y^d = d W^\top\bm{\alpha}^1 + (1 - d) W^\top\bm{\alpha}^0_0 + Z^\top \bm{\gamma}_0 + d\varepsilon^1 + (1-d)\varepsilon^0\\
    &= d W^\top\left(\bm{\alpha}^1_0 - \bm{\alpha}^0_0\right) + W^\top\bm{\alpha}^0_0 + Z^\top \bm{\gamma}_0 + d\varepsilon^1 + (1-d)\varepsilon^0\\
    &= d W^\top\overline{\bm{\alpha}}_0 + R^\top\bm{\rho}_0 + \overline{\varepsilon}^d,
\end{align*}
where $\overline{\bm{\alpha}}_0 = \bm{\alpha}^1_0 - \bm{\alpha}^0_0$, $R = 
   (W^\top\ \ Z^\top) 
^\top$, $\bm{\rho}_0 = 
   ((\bm{\alpha}^0)^\top_0\ \ \bm{\gamma}^\top_0)
^\top$, and $\overline{\varepsilon} = d\varepsilon^1 + (1-d)\varepsilon^0$. If we know which terms are common parameters in advance, we can conduct linear regression based on $Y^d = d W^\top\overline{\bm{\alpha}}_0 + R^\top\bm{\rho}_0 + \overline{\varepsilon}^d$. Existing studies mainly consider such regression models. For example, the R-learner by \citet{Nie2020} considers those models, assuming a more general form than linear regression. In contrast, in our study, we do not assume the knowledge about which terms are common parameters. 
\end{remark}

\subsection{CATE Linear Regression Model with Implicit Sparsity}
\label{sec:implicit}
We firstly consider the following unified linear model derived from the original linear model \eqref{eq:linear} for $d \in \mathcal{D}$:
\begin{align}
\label{eq:regression_model}
\overline{Y} &= X^\top\bm{\beta} + \overline{\varepsilon},
\end{align}
where 
\begin{align}
\overline{Y} &= Y^1 - Y^0,\ \bm{\beta} = \bm{\beta}^1 - \bm{\beta}^0,\ \overline{\varepsilon} = \varepsilon^1 - \varepsilon^0. \label{def:regression_variables}
\end{align}
Here, $\overline{Y}$ is an unobservable random variable defined using potential outcomes, and $\overline{\varepsilon}$ is an error term that is a composite of $\varepsilon^1$, $\varepsilon^0$, and $D$. It should also be noted that $\mathbb{E}[\overline{\varepsilon}|X] = 0$ almost surely.

This model \eqref{eq:regression_model} demonstrates the implicit sparsity. 
By considering the difference between $Y^1$ and $Y^0$ and using \eqref{def:separability},
we can eliminate the term $Z^\top\bm{\gamma}$ from the regression model. Consequently, we obtain the following:
\begin{lemma}[Implicit Sparsity]
    Consider the linear regression model \eqref{eq:linear} for $d \in \mathcal{D}$ and suppose that Assumption \ref{asm:separable} holds.
    Then, under $P_0$, the (unobserved) linear regression model \eqref{eq:regression_model} is given as
\begin{align}
\label{eq:hidden_linear}
\overline{Y} &= W^\top\left(\bm{\alpha}^1_0 - \bm{\alpha}^0_0\right) + \overline{\varepsilon}.
\end{align}
\end{lemma}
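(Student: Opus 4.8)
The plan is to substitute the separability decomposition of Assumption~\ref{asm:separable} into the two potential-outcome linear models \eqref{eq:linear} and verify that the common block cancels when forming the difference $\overline{Y} = Y^1 - Y^0$.

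First I would write each model \eqref{eq:linear} under $P_0$ in block form. Using $\bm{\beta}^d_0 = \big((\bm{\alpha}^d_0)^\top, \bm{\gamma}^\top_0\big)^\top$ together with the conformable covariate partition $X = (W^\top, Z^\top)^\top$, the inner product decomposes as $X^\top\bm{\beta}^d_0 = W^\top\bm{\alpha}^d_0 + Z^\top\bm{\gamma}_0$, so that $Y^d = W^\top\bm{\alpha}^d_0 + Z^\top\bm{\gamma}_0 + \varepsilon^d$ for each $d \in \{1,0\}$, exactly as displayed in Section~\ref{sec:ind_common}.

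Next I would subtract the $d=0$ equation from the $d=1$ equation. Since the common parameter $\bm{\gamma}_0$ does not depend on $d$ by Assumption~\ref{asm:separable}, the terms $Z^\top\bm{\gamma}_0$ are identical across treatments and cancel, leaving $\overline{Y} = W^\top\big(\bm{\alpha}^1_0 - \bm{\alpha}^0_0\big) + (\varepsilon^1 - \varepsilon^0)$. Recognizing $\overline{\varepsilon} = \varepsilon^1 - \varepsilon^0$ from \eqref{def:regression_variables} gives \eqref{eq:hidden_linear}. I would also check consistency with the unified model \eqref{eq:regression_model}: the stated coefficient $\bm{\beta} = \bm{\beta}^1_0 - \bm{\beta}^0_0$ has block form $\big((\bm{\alpha}^1_0 - \bm{\alpha}^0_0)^\top, \bm{0}^\top\big)^\top$, whence $X^\top\bm{\beta} = W^\top\big(\bm{\alpha}^1_0 - \bm{\alpha}^0_0\big)$, matching the right-hand side of \eqref{eq:hidden_linear}.

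There is essentially no obstacle here: the statement is a direct algebraic consequence of the block decomposition, and the only care required is to keep the covariate partition conformable with the parameter partition. If one wishes to emphasize that \eqref{eq:hidden_linear} is a genuine regression model rather than a bare identity, one can additionally invoke $\mathbb{E}[\overline{\varepsilon}\mid X] = 0$ a.s.\ (already noted in the text following from Assumption~\ref{asm:eror}), but this is not needed for the equality itself. \ep
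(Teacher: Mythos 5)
Your argument is correct and is exactly the substitution-and-cancellation the paper has in mind; the paper's own proof simply states that the identity follows immediately from \eqref{def:regression_variables} and \eqref{def:separability}, and your write-up just makes that one-line observation explicit, including the correct remark that $\mathbb{E}[\overline{\varepsilon}\mid X]=0$ is not needed for the algebraic equality.
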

\begin{proof}
    It obviously holds by the definitions of the variables \eqref{def:regression_variables} and the form \eqref{def:separability} of $\bm{\beta}^d_0$.
\end{proof}

Note that the linear regression model in \eqref{eq:hidden_linear} only has a $s_0$-dimensional parameter $\bm{\alpha}^1_0 - \bm{\alpha}^0_0$.
This situation is regarded that the $p$-dimensional parameter $\bm{\beta}_0$ in the model \eqref{eq:regression_model} has only $s_0$ non-zero elements, and the other elements are zero which is ignored.
In other words, we can regard $\bm{\beta}_0$ in the model \eqref{eq:regression_model} has the sparsity in spite that we do not assume the sparsity on $\bm{\beta}^1_0$ and $\bm{\beta}^0_0$. 
We refer to this property as the \textit{implicit sparsity}. 

\section{Lasso for CATE Estimation}
\label{sec:cateLasso}
The Lasso \citep{tibshirani96regression} is an estimation method with a regularization that adds the $\ell_1$-norm penalty on the parameters. 
In this section, we propose the CATE Lasso estimator, which utilizes the implicit sparsity introduced in Section~\ref{sec:implicit}. Unlike the standard Lasso  penalizing the parameter itself, the CATE Lasso penalizes a difference between two parameters.

We focus on the fact that for each $d\in\mathcal{D}$, the following linear regression model holds under $P$:
\begin{align*}
    \widetilde{\mathbb{Y}}^d = \widetilde{\bm{X}}^d\bm{\beta}^d + \widetilde{\bm{\varepsilon}}^d,
\end{align*}
where $\widetilde{\bm{\varepsilon}}^d$ is an $m^d$-dimensional vector whose $i$-th element $\widetilde{\bm{\varepsilon}}^d_i$ is equal to $\varepsilon^d_j$ such that $j = \min\{k\in[N]| \sum^k_{s=1} \mathbbm{1}[D_s = d] = i\}$.

Let $\bm{\beta}_0$, $\bm{\alpha}_0$ and $\bm{\gamma}_0$ be the parameter of distributions that generate observations; that is, they are the true parameters of $\bm{\beta}$, $\bm{\alpha}$ and $\bm{\gamma}$ under $P_0$.  Then, under the linear regression models, for $x = \left(w, z\right)$, where $w\in\mathcal{W}$ and $z\in \mathcal{Z}$, the CATE $f_0(x)$ can be rewritten as
\begin{align*}
    f_0(x) = x^\top \bm{\beta}_0 = w^\top \left(\bm{\alpha}^1_0 - \bm{\alpha}^0_0\right).
\end{align*}

We estimate $\bm{\beta}_0$ by using the least squares with $\ell_1$-penalty. Because we cannot observe $Y^1_i - Y^0_i$, we instead minimize two squared losses defined between $Y^1_i$ and $\widetilde{X}^\top_i \bm{\beta}^1$ and between $Y_i^0$ and $\widetilde{X}^\top_i \bm{\beta}^0$. Unlike the standard Lasso, we penalize $\bm{\beta}^1 - \bm{\beta}^0$, not $\bm{\beta}^1$ and $\bm{\beta}^0$, separately. We refer to our Lasso as the \emph{CATE Lasso}.

\subsection{Estimation Strategy of CATEs}
First, we consider a estimation strategy for CATEs under the implicit sparsity. 
In other words, we consider the regularization for the difference between $\bm{\beta}^1$ and $\bm{\beta}^0$ to utilize the implicit sparsity. Let $\widehat{\bm{\beta}}^1$ and $\widehat{\bm{\beta}}^0$ be estimators of $\bm{\beta}_0^1$ and $\bm{\beta}_0^0$, which are defined as
\begin{align}
\label{eq:Lasso_cate}
&\left(\widehat{\bm{\beta}}^1, \widehat{\bm{\beta}}^0\right) := \left(\widehat{\bm{\beta}}^1(\lambda), \widehat{\bm{\beta}}^{0}(\lambda)\right)\\
&\in \argmin_{\bm{\beta}^1\in\mathbb{R}^p,\ \bm{\beta}^0\in\mathbb{R}^p} \Bigg\{\delta\|\widetilde{\mathbb{Y}}^1 - \widetilde{\bm{X}}^1\bm{\beta}^1 \|^2_2 / n\nonumber\\
&+ (1 - \delta)\|\widetilde{\mathbb{Y}}^0 - \widetilde{\bm{X}}^0\bm{\beta}^0 \|^2_2 / n + 2\lambda\left\|\bm{\beta}^1 - \bm{\beta}^0\right\|_1\Bigg\},\nonumber
\end{align}
where $\lambda > 0$ is a penalty coefficient and $\delta\in(0, 1)$ is a balancing weight. 

In this strategy, we regularize a difference $\bm{\beta}^1 - \bm{\beta}^0$, unlike the standard Lasso. This regularization allows us to employ the sparsity in $\bm{\beta}^1 - \bm{\beta}^0$, while we can assume that $\bm{\beta}^d_0$ is \emph{not} sparse as well as the standard Lasso. Similar estimators have been proposed in the literature of Lasso, such as the fused Lasso \citep{Tibshirani2005} and $\ell_1$-trend filtering \citep{Kim2009}.

\subsection{The CATE Lasso}
\label{sec:cate_interpolate}
In the optimization problem \eqref{eq:Lasso_cate}, there can be multiple solutions unlike the standard Lasso because we only impose a regularization for $\bm{\beta}^1 - \bm{\beta}^0$. 
Therefore, this section develops a uniquely determined estimator of $\bm{\beta}^0$.
We refer to this estimator as the \emph{CATE Lasso estimator}, a Lasso specialized for CATE estimation. 

In preparation, we consider the following optimization problem, which is equivalent to the problem \eqref{eq:Lasso_cate}: 
\begin{align}
\label{eq:Lasso_cate_implement}
&\left(\widehat{\bm{\beta}}, \widehat{\bm{\beta}}^0\right) := \left(\widehat{\bm{\beta}}(\lambda), \widehat{\bm{\beta}}^0(\lambda)\right)\\
&\in \argmin_{\bm{\beta}\in\mathbb{R}^p,\ \bm{\beta}^0\in\mathbb{R}^p} \Big\{\delta\|\widetilde{\mathbb{Y}}^1 - \widetilde{\bm{X}}^1\left(\bm{\beta} + \bm{\beta}^0\right) \|^2_2 / n\nonumber\\
&\ \ \ \ \ \  \ + (1-\delta)\|\widetilde{\mathbb{Y}}^0 - \widetilde{\bm{X}}^0\bm{\beta}^0 \|^2_2 / n + 2\lambda\left\|\bm{\beta}\right\|_1\Big\},\nonumber
\end{align}
where $\bm{\beta}$ is a parameter which represents $\bm{\beta}_0 = \bm{\beta}^1_0 - \bm{\beta}^0_0$. 
This problem is mathematically equivalent to our previous definition of $(\widehat{\bm{\beta}}^1, \widehat{\bm{\beta}}^0)$ in \eqref{eq:Lasso_cate}, since we can achieve $\widehat{\bm{\beta}}^1 = \widehat{\bm{\beta}} + \widehat{\bm{\beta}}^0$ by the estimators $(\widehat{\bm{\beta}}, \widehat{\bm{\beta}}^0)$ in \eqref{eq:Lasso_cate_implement}.
Then, we estimate $\bm{\beta}_0$ with the following procedure.

\paragraph{(I) Interpolating estimator of $\bm{\beta}^0_0$.} First, we estimate $\bm{\beta}^0_0$ as
\begin{align*}
&\widehat{\bm{\beta}}^0:=\argmin_{\bm{\beta}^0\in\mathbb{R}^p} \Big\{(1-\delta)\|\widetilde{\mathbb{Y}}^0 - \widetilde{\bm{X}}^0\bm{\beta}^0 \|^2_2 / n\Big\}.
\end{align*}
The estimator $\widehat{\bm{\beta}}^0$ has the following analytical solution:
\begin{align*}
&\widehat{\bm{\beta}}^0 = \Big\{\left(\widetilde{\bm{X}}^0\right)^\top\widetilde{\bm{X}}^0\Big\}^{\dagger}\left(\widetilde{\bm{X}}^0\right)^\top\widetilde{\mathbb{Y}}^0,
\end{align*}
where $ \{(\widetilde{\bm{X}}^0)^\top\widetilde{\bm{X}}^0\}^{\dagger}$ is a pseudoinverse of $(\widetilde{\bm{X}}^0)^\top\widetilde{\bm{X}}^0$. Such an estimator is referred to as an \emph{interpolating estimator} or a \emph{minimum-norm estimator} \citep{Bartlett2020} because $\widehat{\bm{\beta}}^0$ is a parameter with the smallest $L^2$ norm $\bm{\beta}^0$ among ($\|\bm{\beta}^0\|_2$) that perfectly fit $\mathbb{Y}^0$ with $p \geq n$. Note that we cannot identify $\bm{\beta}^0_0$ itself even if using this estimator; however, as shown in Section~\ref{sec:theoretical}, we can use it for estimating $\bm{\beta}_0$. 

\paragraph{(II) Lasso for estimating $\bm{\beta}_0$.} 
Using the interpolating estimator $\widehat{\bm{\beta}}^0$ and the Lasso, we estimate $\bm{\beta}_0$ as
\begin{align*}
&\widehat{\bm{\beta}} :=\argmin_{\bm{\beta}\in\mathbb{R}^p} \left\{\delta\|\widetilde{\mathbb{Y}}^1 - \widetilde{\bm{X}}^1\left(\bm{\beta} + \widehat{\bm{\beta}}^0\right) \|^2_2 / n + 2\lambda\left\|\bm{\beta}\right\|_1\right\},
\end{align*}
for some $\lambda > 0$. 
This optimization can be solved by the standard algorithm (e.g. the coordinate descent method) used for the Lasso. 
Thus, we obtain the CATE Lasso estimator $\widehat{\bm{\beta}}$ for $\bm{\beta}_0$. 

Note that the choice of $\delta$ does not affect the estimation (we can include $\delta$ into $\lambda$). However, in practice, we may stabilize our method by a suitable choice of $\delta$. 

\paragraph{(III) Estimator for CATE $f_0(x)$.} 
Using the CATE Lasso estimator $\widehat{\bm{\beta}}$, we estimate the CATE $f_0(x)$ for each $x\in\mathcal{X}$ as
\begin{align*}
    \widehat{f}(x) := x^\top \widehat{\bm{\beta}}.
\end{align*}

To the best of our knowledge, such a type of estimator is novel in the literature of CATE estimation.




\section{Theoretical Results}
\label{sec:theoretical}
This section provides several theoretical properties of our CATE Lasso estimator. Our analysis is inspired by \citet{vandeGeer2014}. 

First, we consider a case where $\bm{X}$ is non-random (fixed-design), $\varepsilon^d$ follows a sub-Gaussian distribution, and $p$ is fixed. We show more detailed arguments about the analysis for a fixed design in Appendix~\ref{sec:theoretical_fix}. Based on the results of the fixed design, we show consistency of $\widehat{\bm{\beta}}$ under a random design, where $\bm{X}$ is random, $\varepsilon^d$ is non-Gaussian, and $p = p_n \to \infty$ as $n\to\infty$.

To derive a tight upper bound for the estimation error, we follow \citet{Buhlmann2011} in asserting that the compatibility condition plays a crucial role in identifiability. To define the compatibility condition, for a $p\times 1$ vector $\bm{\beta}$ and a subset $\mathcal{S}_0 \subseteq \mathcal{P}$, we introduce a notation that denotes the sparsity of $\bm{\beta}_0$. Let $\mathcal{S}_0 \subseteq \mathcal{P}$ be a set such that
\begin{align*}
    \mathcal{S}_0 := \left\{j: \beta_{0, j} \neq 0\right\},
\end{align*}
where $\beta_{0, j}$ is the $j$-th element of $\bm{\beta}_0$. 
For an index set $\mathcal{S}_0 \subset \mathcal{P}$, let $\bm{\beta}_{\mathcal{S}_0}$ and $\bm{\beta}_{\mathcal{S}^c_0}$ be vectors whose $j$-th elements are defined as follows:
\begin{align*}
    \beta_{j, \mathcal{S}_0} := \beta_{j}\mathbbm{1}[j\in \mathcal{S}_0],\qquad \beta_{j, \mathcal{S}^c_0} := \beta_{j}\mathbbm{1}[j\notin \mathcal{S}_0],
\end{align*}
respectively, where $\beta_j$ is the $j$-th element of $\bm{\beta}$. Therefore, we obtain $\bm{\beta} = \bm{\beta}_{\mathcal{S}_0} + \bm{\beta}_{\mathcal{S}^c_0}$. 

Note that from the definition of the individual and common parameters, it holds that
\begin{align*}
    \beta_{j, \mathcal{S}_0} = \alpha^1_j - \alpha^0_j\quad \mathrm{and}\quad  \beta_{j, \mathcal{S}^c_0} = \gamma_j - \gamma_j = 0;
\end{align*}
that is, $j (\in \mathcal{S}_0)$-th element of $\bm{\beta}^d$ corresponds to $\bm{\alpha}^d$ and $j (\in \mathcal{S}^c_0)$-th element of $\bm{\beta}^d$ corresponds to $\bm{\gamma}$. Here, note that $s_0 = |\mathcal{S}_0|$, where recall that $s_0$ is defined as the dimension of $\bm{\alpha}$ in Section~\ref{sec:ind_common}. 

Let $\widehat{\Sigma}^1 := (\widetilde{\bm{X}}^1)^\top \widetilde{\bm{X}}^1 / n$.
Here, we define the compatibility condition. 
\begin{definition}[Compatibility condition. From (6.4) in \citet{Buhlmann2011}]
    We say that the compatibility condition holds for the set $S_0$, if there is a positive constant $\phi_0 > 0$ such that for all $\bm{\beta}$ satisfying $\|\bm{\beta}_{\mathcal{S}^c_0}\|_1 \leq 3 \|\bm{\beta}_{\mathcal{S}_0}\|_1$, we have
    \begin{align*}
        \|\bm{\beta}_{\mathcal{S}_0}\|^2_1 \leq s_0 \bm{\beta}^\top \widehat{\Sigma}^1 \bm{\beta} / \left(\phi_0\right)^2. 
    \end{align*}
    We refer to $\left(\phi_0\right)^2$ as the compatibility constant.
\end{definition}

Let us denote the diagonal elements of $\widehat{\Sigma}^1$, by $\left(\widehat{\sigma}^1_j\right)^2 := \widehat{\Sigma}^1_{j, j}$, for $j\in\mathcal{P}$. 
Using the compatibility condition, we make the following assumption. 
\begin{assumption}[From (A1) in \citet{vandeGeer2014}]
\label{asm:a1_vandegeer}
    The compatibility condition holds for $\widehat{\Sigma}^1$ with compatibility constant $(\phi_0)^2 > 0$. Furthermore, $\max_{j, d} \left(\widehat{\sigma}^1_j\right)^2 \leq M^2$ holds for some $0 < M < \infty$. 
\end{assumption}
Furthermore, we make the following assumption.
\begin{assumption}
\label{asm:finte_pseudo}
With probability one, each element of 
    $\{(\widetilde{\bm{X}}^0)^\top \widetilde{\bm{X}}^0\}^\dagger(\widetilde{\bm{X}}^1)^\top \widetilde{\bm{X}}^1$ is finite.
\end{assumption}
The matrix 
$\{(\widetilde{\bm{X}}^0)^\top \widetilde{\bm{X}}^0\}^\dagger(\widetilde{\bm{X}}^1)^\top \widetilde{\bm{X}}^1$ 
represents a discrepancy between $\widetilde{\bm{X}}^0$ and $\widetilde{\bm{X}}^1$. 
When $\widetilde{\bm{X}}^0$ and $\widetilde{\bm{X}}^1$ are identical or having common eigenvectors with similar eigenvalues, the matrix is approximately an identity, hence Assumption \ref{asm:finte_pseudo} is satisfied. 
It represents the correlation between treatments and covariates, thus explains the structure specific to CATE.

Then, we derive an oracle inequality and consistency result for the CATE Lasso estimator. The proof is shown in Appendix~\ref{appdx:oracle_inequality}.
\begin{theorem}[Oracle inequality and consistency]
\label{thm:oracle}
Assume a linear model in \eqref{eq:linear} with fixed design for $\widetilde{\bm{X}}$ and $\mathbb{D}$, which satisfies Assumptions~\ref{asm:eror}--\ref{asm:separable} and \ref{asm:a1_vandegeer}--\ref{asm:finte_pseudo}. Also suppose that $\varepsilon^d$ follows a centered sub-Gaussian distribution  with variance $(\sigma^d_\varepsilon)^2$. Let $t>0$ be arbitrary. Consider the CATE Lasso estimator $\widehat{\bm{\beta}}$ with regularization parameter $\lambda \geq 3 M \sigma^\dagger_{\varepsilon}\sqrt{\frac{2\left(t^2 + \log p\right)}{n}}$, where $(\sigma^\dagger_{\varepsilon})^2 = (\sigma^1_{\varepsilon})^2 + (\sigma^0_{\varepsilon})^2\mathrm{tr}(\{(\widetilde{\bm{X}}^0)^\top \widetilde{\bm{X}}^0\}^\dagger(\widetilde{\bm{X}}^1)^\top \widetilde{\bm{X}}^1)$. Then, with probability at least $1 - 2\exp\left(-t^2\right)$, 
        \begin{align*}
            &\left\| \widehat{\bm{\beta}} - \bm{\beta}_0 \right\|_1 \leq C_1 \lambda s_0 / \left( \phi_0 \right)^2,\mbox{~~and}\\
            &\left\|\widetilde{\bm{X}}^1\left(\widehat{\bm{\beta}} - \bm{\beta}_0\right) \right\|^2_2 / n \leq C_2 \lambda^2 s_0 / \left( \phi_0 \right)^2,
        \end{align*}
        hold, where $C_1, C_2 > 0$ are some universal constants.
\end{theorem}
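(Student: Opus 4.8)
The strategy is the standard basic-inequality argument for the Lasso, as in Chapter~6 of \citet{Buhlmann2011} and \citet{vandeGeer2014}, but carried through with the modified response and the extra noise coming from plugging in the interpolating estimator $\widehat{\bm{\beta}}^0$. First I would unpack what $\widetilde{\mathbb{Y}}^1 - \widetilde{\bm{X}}^1(\bm{\beta} + \widehat{\bm{\beta}}^0)$ actually is. Writing $\widetilde{\mathbb{Y}}^1 = \widetilde{\bm{X}}^1 \bm{\beta}^1_0 + \widetilde{\bm{\varepsilon}}^1$ and $\bm{\beta}^1_0 = \bm{\beta}_0 + \bm{\beta}^0_0$, this equals $\widetilde{\bm{X}}^1(\bm{\beta}_0 - \bm{\beta}) + \widetilde{\bm{X}}^1(\bm{\beta}^0_0 - \widehat{\bm{\beta}}^0) + \widetilde{\bm{\varepsilon}}^1$. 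Now $\widehat{\bm{\beta}}^0 = \{(\widetilde{\bm{X}}^0)^\top\widetilde{\bm{X}}^0\}^\dagger(\widetilde{\bm{X}}^0)^\top\widetilde{\mathbb{Y}}^0 = \{(\widetilde{\bm{X}}^0)^\top\widetilde{\bm{X}}^0\}^\dagger(\widetilde{\bm{X}}^0)^\top(\widetilde{\bm{X}}^0\bm{\beta}^0_0 + \widetilde{\bm{\varepsilon}}^0)$, so $\widetilde{\bm{X}}^1(\bm{\beta}^0_0 - \widehat{\bm{\beta}}^0)$ contributes a deterministic projection-mismatch term (which vanishes if $\widetilde{\bm{X}}^0$ has full column rank, and more generally lies in a controlled subspace) plus a linear-in-$\widetilde{\bm{\varepsilon}}^0$ term $-\widetilde{\bm{X}}^1\{(\widetilde{\bm{X}}^0)^\top\widetilde{\bm{X}}^0\}^\dagger(\widetilde{\bm{X}}^0)^\top\widetilde{\bm{\varepsilon}}^0$. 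So the effective ``noise'' seen by the Lasso step is $\bm{\eta} := \widetilde{\bm{\varepsilon}}^1 - \widetilde{\bm{X}}^1\{(\widetilde{\bm{X}}^0)^\top\widetilde{\bm{X}}^0\}^\dagger(\widetilde{\bm{X}}^0)^\top\widetilde{\bm{\varepsilon}}^0$, and by Assumption~\ref{asm:separable} there is no deterministic bias (the implicit-sparsity Lemma says the model for $\overline{Y}$ is exactly linear in the $s_0$-dimensional parameter, and $\widetilde{\bm{X}}^0\bm{\beta}^0_0$ is perfectly interpolated since $\bm{\beta}^0_0$ is in the row space when $p\ge n$; the finiteness in Assumption~\ref{asm:finte_pseudo} makes the relevant matrix products well-defined).

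Second, I would establish the key probabilistic event $\mathcal{T} := \{\,2\|(\widetilde{\bm{X}}^1)^\top\bm{\eta}/n\|_\infty \le \lambda\,\}$. Coordinate-wise, $2(\widetilde{\bm{X}}^1_j)^\top\bm{\eta}/n$ is a linear combination of the independent sub-Gaussian vectors $\widetilde{\bm{\varepsilon}}^1$ and $\widetilde{\bm{\varepsilon}}^0$ (independent by Assumption~\ref{asm:eror}), hence itself sub-Gaussian; its variance proxy is $\tfrac{1}{n}(\widehat{\sigma}^1_j)^2(\sigma^1_\varepsilon)^2 + \tfrac{1}{n}\cdot(\text{something})\cdot(\sigma^0_\varepsilon)^2$, where the ``something'' is the $j$-th diagonal entry of $\widetilde{\bm{X}}^1\{(\widetilde{\bm{X}}^0)^\top\widetilde{\bm{X}}^0\}^\dagger(\widetilde{\bm{X}}^1)^\top$ scaled appropriately — and this is exactly where the trace term $\mathrm{tr}(\{(\widetilde{\bm{X}}^0)^\top \widetilde{\bm{X}}^0\}^\dagger(\widetilde{\bm{X}}^1)^\top \widetilde{\bm{X}}^1)$ in the definition of $(\sigma^\dagger_\varepsilon)^2$ enters, after bounding the per-coordinate contribution by $M^2$ for the first piece and by the full trace for the second. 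A union bound over $j\in\mathcal{P}$ together with the sub-Gaussian tail and the stated choice $\lambda \ge 3M\sigma^\dagger_\varepsilon\sqrt{2(t^2+\log p)/n}$ gives $\Prob(\mathcal{T}) \ge 1 - 2\exp(-t^2)$.

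Third, on the event $\mathcal{T}$ I would run the usual basic inequality. Optimality of $\widehat{\bm{\beta}}$ gives $\delta\|\widetilde{\mathbb{Y}}^1 - \widetilde{\bm{X}}^1(\widehat{\bm{\beta}}+\widehat{\bm{\beta}}^0)\|_2^2/n + 2\lambda\|\widehat{\bm{\beta}}\|_1 \le \delta\|\widetilde{\mathbb{Y}}^1 - \widetilde{\bm{X}}^1(\bm{\beta}_0+\widehat{\bm{\beta}}^0)\|_2^2/n + 2\lambda\|\bm{\beta}_0\|_1$; expanding the squares, using $\widetilde{\mathbb{Y}}^1 - \widetilde{\bm{X}}^1(\bm{\beta}_0+\widehat{\bm{\beta}}^0) = \bm{\eta}$ (up to the harmless projection term), and absorbing $\delta$ into $\lambda$, I get $\|\widetilde{\bm{X}}^1(\widehat{\bm{\beta}}-\bm{\beta}_0)\|_2^2/n + 2\lambda\|\widehat{\bm{\beta}}\|_1 \le 2|(\widetilde{\bm{X}}^1)^\top\bm{\eta}/n|^\top|\widehat{\bm{\beta}}-\bm{\beta}_0| + 2\lambda\|\bm{\beta}_0\|_1 \le \lambda\|\widehat{\bm{\beta}}-\bm{\beta}_0\|_1 + 2\lambda\|\bm{\beta}_0\|_1$ on $\mathcal{T}$. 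Since $\bm{\beta}_0$ is supported on $\mathcal{S}_0$ with $|\mathcal{S}_0| = s_0$ (by the separability/implicit-sparsity structure), splitting the $\ell_1$ norms into $\mathcal{S}_0$ and $\mathcal{S}_0^c$ coordinates and rearranging yields both the cone condition $\|(\widehat{\bm{\beta}}-\bm{\beta}_0)_{\mathcal{S}_0^c}\|_1 \le 3\|(\widehat{\bm{\beta}}-\bm{\beta}_0)_{\mathcal{S}_0}\|_1$ and the bound $\|\widetilde{\bm{X}}^1(\widehat{\bm{\beta}}-\bm{\beta}_0)\|_2^2/n + \lambda\|\widehat{\bm{\beta}}-\bm{\beta}_0\|_1 \le 4\lambda\|(\widehat{\bm{\beta}}-\bm{\beta}_0)_{\mathcal{S}_0}\|_1$. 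Invoking the compatibility condition (Assumption~\ref{asm:a1_vandegeer}) on the vector $\widehat{\bm{\beta}}-\bm{\beta}_0$, which is legitimate precisely because the cone condition holds, gives $\|(\widehat{\bm{\beta}}-\bm{\beta}_0)_{\mathcal{S}_0}\|_1 \le \sqrt{s_0}\,\|\widetilde{\bm{X}}^1(\widehat{\bm{\beta}}-\bm{\beta}_0)\|_2/(\sqrt{n}\,\phi_0)$; substituting and using $4ab \le a^2 + 4b^2$ decouples the two terms to produce $\|\widetilde{\bm{X}}^1(\widehat{\bm{\beta}}-\bm{\beta}_0)\|_2^2/n \lesssim \lambda^2 s_0/\phi_0^2$ and then $\|\widehat{\bm{\beta}}-\bm{\beta}_0\|_1 \lesssim \lambda s_0/\phi_0^2$, which are the two claimed inequalities with explicit universal constants $C_1, C_2$.

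The main obstacle is the careful handling of the plug-in term $\widetilde{\bm{X}}^1(\bm{\beta}^0_0 - \widehat{\bm{\beta}}^0)$: one must verify that in the overparametrized regime $p \ge n$ the interpolating estimator reproduces $\widetilde{\bm{X}}^0\bm{\beta}^0_0$ exactly so that no deterministic bias survives, identify the residual purely stochastic part $-\widetilde{\bm{X}}^1\{(\widetilde{\bm{X}}^0)^\top\widetilde{\bm{X}}^0\}^\dagger(\widetilde{\bm{X}}^0)^\top\widetilde{\bm{\varepsilon}}^0$, and control its contribution to the variance proxy of each coordinate of $(\widetilde{\bm{X}}^1)^\top\bm{\eta}/n$ by the trace quantity — this is the step where Assumptions~\ref{asm:eror} and \ref{asm:finte_pseudo} and the precise form of $(\sigma^\dagger_\varepsilon)^2$ all get used, and it is the only genuine departure from the textbook Lasso analysis; everything downstream is the standard oracle-inequality bookkeeping.
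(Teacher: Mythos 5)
Your proposal follows essentially the same route as the paper: the same reduction of the Lasso step to a linear model with effective noise $\widetilde{\bm{\varepsilon}}^\dagger = \bm{\varepsilon}^1 - \widetilde{\bm{X}}^1\{(\widetilde{\bm{X}}^0)^\top\widetilde{\bm{X}}^0\}^{\dagger}(\widetilde{\bm{X}}^0)^\top\bm{\varepsilon}^0$ (the paper's basic inequality, Lemma~\ref{lem:basic_ineq}), the same coordinatewise sub-Gaussian union bound producing the event with the trace term in $(\sigma^\dagger_\varepsilon)^2$ (Lemma~\ref{lem:concent}), and the same cone-condition/compatibility argument with $4uv \leq u^2 + 4v^2$ (Lemma~\ref{lem:oracle}). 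The one subtlety you flag as the main obstacle---that the deterministic mismatch $\widetilde{\bm{X}}^1\bigl(I-\{(\widetilde{\bm{X}}^0)^\top\widetilde{\bm{X}}^0\}^{\dagger}(\widetilde{\bm{X}}^0)^\top\widetilde{\bm{X}}^0\bigr)\bm{\beta}^0_0$ must vanish---is in fact passed over silently in the paper's own proof of Lemma~\ref{lem:basic_ineq}, so your treatment is no less rigorous than the original.
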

This theorem implies that under a proper $t$ and penalty $\lambda$ (oracle), we can show the convergence of the estimator $\widehat{\bm{\beta}}$ to the true value $\bm{\beta}_0$ with high probability. 

Let $\Sigma^1$ be the population of $\widehat{\Sigma}^1$. 
Based on the results of a fixed-design, we show the consistency of $\widehat{\bm{\beta}}$ under a random design where the covariates $X_i$ and treatment assignments $D_i$ are random and $p = p_n \to \infty$ as $n\to \infty$. By assuming $\widetilde{X}^d$ is sub-Gaussian, we can obtain the following theorem. 
\begin{theorem}
\label{thm:nongaussian}
Suppose that Assumptions~\ref{asmp:unconfounded}--\ref{asmp:coherent} hold. 
Assume a linear model in \eqref{eq:linear} with Assumptions~\ref{asm:eror}--\ref{asm:separable} and \ref{asm:a1_vandegeer}--\ref{asm:finte_pseudo}. If $\widetilde{\bm{X}}^d$ is sub-Gaussian and ${\Sigma}^1$ has a strictly positive smallest eigenvalue $\Lambda^2_{\min}$, satisfying $\max \Sigma^1_{j,j} = O(1)$  and $1/\Lambda^2_{\min} = O(1)$ as $n\to \infty$. Consider the CATE Lasso estimator $\widehat{\bm{\beta}}$ and $\lambda \asymp \sqrt{\log(p) / n}$. If $s_0 = o(\sqrt{\log(p)/n})$ holds, then we have the following as $n \to \infty$:
        \begin{align*}
            &\left\| \widehat{\bm{\beta}} - \bm{\beta}_0 \right\|_1 = O_P\left(s_0\sqrt{\log(p) / n}\right),\mbox{~~and}\\
            &\left\|\widetilde{\bm{X}}^1\left(\widehat{\bm{\beta}} - \bm{\beta}_0\right) \right\|^2_2 / n = O_P\left(s_0\log(p) / n\right).
        \end{align*}
\end{theorem}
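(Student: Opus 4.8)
<br>

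The plan is to transfer the fixed-design oracle inequality of Theorem~\ref{thm:oracle} to the random-design setting by verifying that all the conditions required there hold with high probability under the stated assumptions. The structure is the usual two-layer argument: condition on the design matrices $\widetilde{\bm{X}}^1, \widetilde{\bm{X}}^0$ and the treatment vector $\mathbb{D}$, apply Theorem~\ref{thm:oracle} on the event where the empirical quantities are well-behaved, and then control the probability of that event. Concretely, I would first show that with probability tending to one: (a) the empirical Gram matrix $\widehat{\Sigma}^1 = (\widetilde{\bm{X}}^1)^\top\widetilde{\bm{X}}^1/n$ satisfies the compatibility condition for $\mathcal{S}_0$ with a constant $\phi_0^2$ bounded away from zero, and (b) $\max_j (\widehat\sigma^1_j)^2 \le M^2$ for a finite constant $M$. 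Both follow from sub-Gaussianity of $\widetilde{\bm{X}}^d$: the diagonal bound is a standard concentration-of-sub-exponential-variables argument since $\max\Sigma^1_{j,j} = O(1)$, and the compatibility condition transfers from the population Gram matrix $\Sigma^1$ to $\widehat\Sigma^1$ via a uniform deviation bound on restricted quadratic forms (e.g. the argument in \citet{Buhlmann2011} Chapter~6, or a restricted-eigenvalue-to-compatibility implication using $1/\Lambda^2_{\min} = O(1)$), provided $s_0\log p / n \to 0$, which is implied by the hypothesis $s_0 = o(\sqrt{\log(p)/n})$. Here Assumption~\ref{asm:a1_vandegeer} and Assumption~\ref{asm:finte_pseudo} are stated as part of the hypotheses, so I mainly need the population-to-empirical transfer and the finiteness of $\mathrm{tr}(\{(\widetilde{\bm{X}}^0)^\top\widetilde{\bm{X}}^0\}^\dagger(\widetilde{\bm{X}}^1)^\top\widetilde{\bm{X}}^1)$.

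Next I would deal with the fact that $\varepsilon^d$ is no longer assumed sub-Gaussian. The key quantity in Theorem~\ref{thm:oracle} is the empirical process term $(\widetilde{\bm{X}}^1)^\top(\delta\,\bm\varepsilon^1 - (1-\delta)\,\text{(projection residual)})/n$, or more precisely the effective noise $\bm\varepsilon^1 - \widetilde{\bm{X}}^1(\widehat{\bm\beta}^0 - \bm\beta^0_0)$ that arises from plugging in the interpolating estimator; its coordinatewise max must be $O_P(\sqrt{\log p / n})$ to justify taking $\lambda \asymp \sqrt{\log p/n}$. Conditionally on the design, this noise has (conditional) variance proxy governed by $(\sigma^\dagger_\varepsilon)^2 = (\sigma^1_\varepsilon)^2 + (\sigma^0_\varepsilon)^2\,\mathrm{tr}(\{(\widetilde{\bm{X}}^0)^\top\widetilde{\bm{X}}^0\}^\dagger(\widetilde{\bm{X}}^1)^\top\widetilde{\bm{X}}^1)$, which is $O_P(1)$ by Assumption~\ref{asm:finte_pseudo} together with the sub-Gaussian design; so a Chebyshev/union-bound argument (or a truncation argument using only the finite-variance Assumption~\ref{asm:eror}) over the $p$ coordinates shows that with probability at least $1 - o(1)$ the event $\{\lambda \ge 3M\sigma^\dagger_\varepsilon\sqrt{2(t^2+\log p)/n}\}$ holds for a suitable $t = t_n \to \infty$ slowly (e.g. $t_n = \log\log p$), making the failure probability $2\exp(-t_n^2) \to 0$. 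I would then invoke Theorem~\ref{thm:oracle} on the intersection of all these high-probability events to get $\|\widehat{\bm\beta} - \bm\beta_0\|_1 \le C_1\lambda s_0/\phi_0^2$ and $\|\widetilde{\bm{X}}^1(\widehat{\bm\beta}-\bm\beta_0)\|_2^2/n \le C_2\lambda^2 s_0/\phi_0^2$; substituting $\lambda \asymp \sqrt{\log p/n}$ and $\phi_0^2 \gtrsim 1$ yields exactly the claimed rates $O_P(s_0\sqrt{\log p/n})$ and $O_P(s_0\log p/n)$.

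The main obstacle I anticipate is the two-step/plug-in structure coming from the interpolating (minimum-norm) estimator $\widehat{\bm\beta}^0$: unlike a clean single Lasso, the "noise" driving the estimation of $\bm\beta_0$ mixes $\bm\varepsilon^1$ with the excess error $\widetilde{\bm{X}}^1(\widehat{\bm\beta}^0 - \bm\beta^0_0)$, and since $\bm\beta^0_0$ is \emph{not} identified in the overparametrized regime $p > n$, one must argue that this excess term only ever enters through the identifiable combination $\widetilde{\bm{X}}^1\bm\beta_0$ — i.e. that the non-identifiability cancels in the relevant projection, which is precisely where Assumption~\ref{asm:finte_pseudo} and the variance bookkeeping in $\sigma^\dagger_\varepsilon$ do the work. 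A secondary technical point is that $p = p_n \to \infty$ means every concentration bound must be uniform over a growing number of coordinates and the compatibility constant must be controlled uniformly; this is routine but requires the scaling $s_0\log p/n \to 0$, and I would check that the stated condition $s_0 = o(\sqrt{\log(p)/n})$ (which forces $s_0^2\log p/n \to 0$, hence in particular $s_0\log p / n \to 0$ once $s_0 \ge 1$ and $\log p = o(n)$) indeed suffices, flagging any mild strengthening if needed.
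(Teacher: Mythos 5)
Your proposal follows essentially the same route as the paper: the paper's proof consists of invoking Theorem~1.6 of \citet{zhou2009restricted} (the sub-Gaussian extension of \citet{Raskutti2010}) to get the compatibility condition with $(\phi_0)^2$ bounded away from zero with probability tending to one, and then plugging $\lambda \asymp \sqrt{\log p/n}$ into the fixed-design oracle inequality of Theorem~\ref{thm:oracle}, which is exactly your two-layer conditioning argument. Your additional bookkeeping (the diagonal bound $M$, the effective noise $\sigma^\dagger_\varepsilon$ from the interpolation step, and the choice $t_n \to \infty$) is more explicit than what the paper writes but does not change the strategy.
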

This result implies consistency of $\widehat{\bm{\beta}}$; that is, $\widehat{\bm{\beta}} \xrightarrow{\mathrm{p}}\bm{\beta}_0$. 
\begin{proof}
    From Theorem~1.6 in \citet{zhou2009restricted} (a sub-Gaussian extension of Theorem~1 in \citet{Raskutti2010}), there is a constant $L=O(1)$ as $n\to \infty$ depending on $\Lambda_{\min}$ only such that with probability tending to one the compatibility condition holds with compatibility constant $(\phi_0)^2 > 1/L^2$. Combining this result and Theorem~\ref{thm:oracle} yields the statement.
\end{proof}

Note that although these results indicate consistency of $\widehat{\bm{\beta}}$, it does not ensure the convergence of $\widehat{\bm{\beta}}^1$ and $\widehat{\bm{\beta}}^0$ to $\bm{\beta}^1_0$ and $\bm{\beta}^0_0$, respectively. In fact, since we only regularize $\bm{\beta}^1 - \bm{\beta}^0$, the minimizers $\widehat{\bm{\beta}}^1$ and $\widehat{\bm{\beta}}^0$ of the objective function might not be unique. This finding suggests that even if we can not estimate $\bm{\beta}^1_0$ and $\bm{\beta}^0_0$ consistently, it is still possible to consistently estimate $\bm{\beta}_0$ under the implicit sparsity assumption.

From Theorem~\ref{thm:nongaussian}, we obtain the following corollary.
\begin{corollary}
    Assume the same conditions in Theorem~\ref{thm:nongaussian}. If $s_0 = o(\sqrt{\log(p)/n})$, then
    $
        \big|x^\top\widehat{\bm{\beta}} - f_P(x) \big| = \big|x^\top\big(\widehat{\bm{\beta}} - \bm{\beta}_0 \big) \big| = o_P(1)
    $
    holds for $x\in\mathcal{X}$ as $n \to \infty$. 
\end{corollary}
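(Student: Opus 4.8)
The plan is to reduce the claim directly to the $\ell_1$‑consistency bound of Theorem~\ref{thm:nongaussian} via a single application of \Holder's inequality, so that there is essentially no independent work to do. First I would recall that, under the linear model \eqref{eq:linear} together with the separability Assumption~\ref{asm:separable}, the CATE admits the closed form $f_P(x) = x^\top\bm{\beta}_0$ (equivalently $w^\top(\bm{\alpha}^1_0 - \bm{\alpha}^0_0)$), as noted just above \eqref{eq:Lasso_cate}. Consequently the two quantities in the corollary literally coincide, $x^\top\widehat{\bm{\beta}} - f_P(x) = x^\top(\widehat{\bm{\beta}} - \bm{\beta}_0)$, and the first equality needs no argument.

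Next I would apply the elementary bound
\[
  \big|x^\top(\widehat{\bm{\beta}} - \bm{\beta}_0)\big| \le \|x\|_\infty\,\|\widehat{\bm{\beta}} - \bm{\beta}_0\|_1 .
\]
Since $\mathcal{X}$ is the (bounded) covariate space, $\|x\|_\infty$ is finite uniformly in $n$; more generally it suffices that $\|x\|_\infty = O(1)$, which we take as part of the standing setup. Then Theorem~\ref{thm:nongaussian}, whose hypotheses are assumed verbatim here, gives $\|\widehat{\bm{\beta}} - \bm{\beta}_0\|_1 = O_P\big(s_0\sqrt{\log(p)/n}\big)$, and hence
\[
  \big|x^\top(\widehat{\bm{\beta}} - \bm{\beta}_0)\big| = O_P\big(s_0\sqrt{\log(p)/n}\big).
\]

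Finally I would invoke the growth condition on $s_0$: under $s_0 = o(\sqrt{\log(p)/n})$ (and a fortiori under the more natural scaling $s_0\sqrt{\log(p)/n}=o(1)$), the deterministic rate $s_0\sqrt{\log(p)/n}$ tends to $0$, so the $O_P$ term is in fact $o_P(1)$. This is exactly the assertion, uniformly over any fixed evaluation point $x\in\mathcal{X}$.

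The only point that requires care — the obstacle, such as it is — is the control of $\|x\|_\infty$ when $p = p_n\to\infty$: the evaluation point $x$ is then a sequence of vectors of growing dimension, and one needs its sup‑norm to stay bounded (or at least to grow more slowly than the reciprocal of the $\ell_1$‑rate) for the \Holder\ step to give $o_P(1)$. This holds whenever $\mathcal{X}$ is a fixed bounded set, which is the implicit assumption throughout the paper; I would simply make it explicit in the statement. Everything else is a one‑line consequence of Theorem~\ref{thm:nongaussian}.
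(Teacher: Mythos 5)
Your proposal is correct and follows essentially the same route as the paper, which also disposes of the corollary in one line by reducing it to the $\ell_1$-rate of Theorem~\ref{thm:nongaussian}: the paper writes $\big|x^\top\widehat{\bm{\beta}} - f_P(x)\big| \leq \|x\|_2\|\widehat{\bm{\beta}} - \bm{\beta}_0\|_2 \leq \|x\|_2\|\widehat{\bm{\beta}} - \bm{\beta}_0\|_1$ via Cauchy--Schwarz, whereas you use the \Holder{} pairing $\|x\|_\infty\|\widehat{\bm{\beta}} - \bm{\beta}_0\|_1$. Your variant is marginally sharper and has the merit of making explicit the boundedness of the evaluation point (only $\|x\|_\infty = O(1)$ rather than $\|x\|_2 = O(1)$ when $p = p_n \to \infty$), a condition the paper leaves implicit.
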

We used the Cauchy-Schwarz inequality as $ \big|x^\top\widehat{\bm{\beta}} - f_P(x) \big| \leq \|x\|_2\|\widehat{\bm{\beta}} - \bm{\beta}_0\|_2$, and $\|\widehat{\bm{\beta}} - \bm{\beta}_0\|_2 \leq \|\widehat{\bm{\beta}} - \bm{\beta}_0\|_1$. 

\begin{figure*}[h]
  \centering
  \vspace{-3mm}
    \includegraphics[width=140mm]{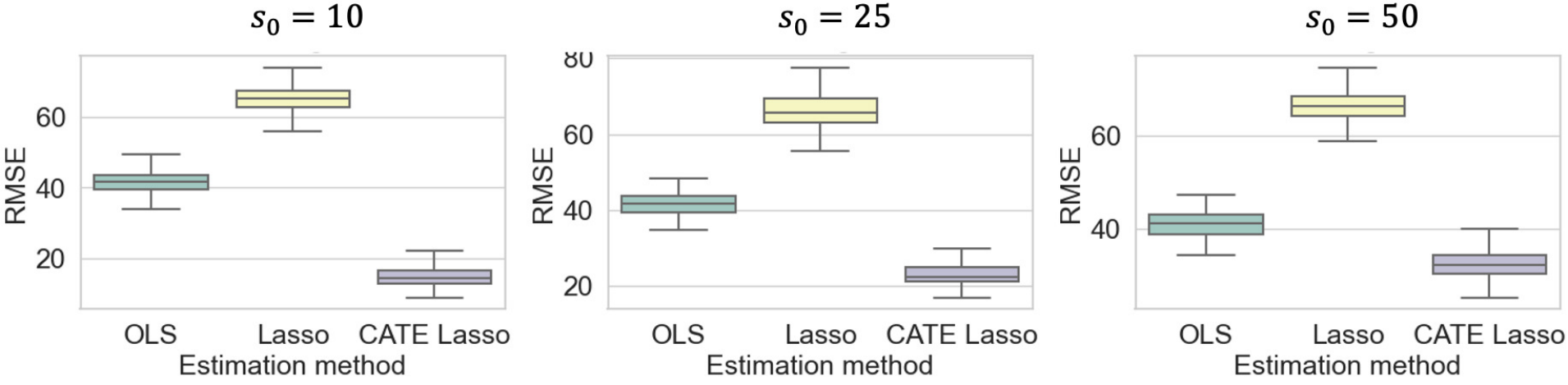}
\vspace{-3mm}
\caption{RMSE of simulation studies with $p = 300$ and $s_0 = 10, 25, 50$.}
\label{fig:exp1}
\centering
    \includegraphics[width=140mm]{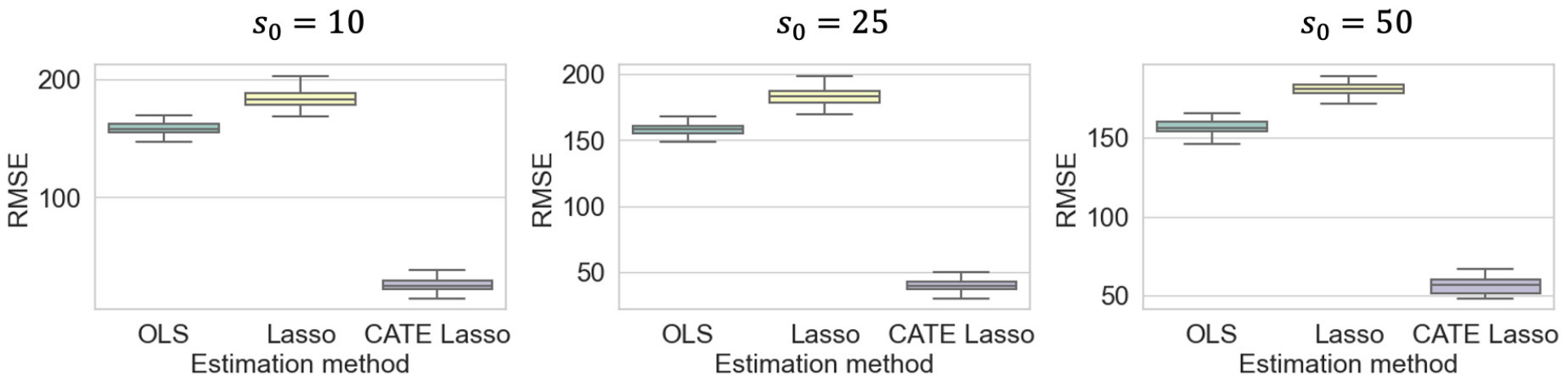}
\vspace{-3mm}
\caption{RMSE of simulation studies with $p = 1000$ and $s_0 = 10, 25, 50$.}
\label{fig:exp2}
\vspace{-5mm}
\end{figure*}

\section{Related Work}
\label{sec:related}
This section briefly introduces related work. More details and open issues, including an extension to the debiased Lasso, are discussed in Appendix~\ref{appdx;related}.

\subsection{Related Work}
Early work on CATEs are \citet{Heckman1997} and \citet{Heckman2005}. 
With the rise of machine learning algorithms, various methods for CATE estimation have been proposed. Some of them are summarized as meta-learners by \citet{Kunzel2019} (see the following section). 
There is a stream of work that employs neural networks \citep{Johansson2016,Shalit2017,Shi2019,Hassanpour2020Learning,curth2021nonparametric,curth2021inductive}, utilizing methods and properties of neural networks, such as representation learning \citep{bengio2014representation} and multi-task learning \citep{Caruana1997}. \citet{yoon2018ganite} applies the generative adversarial nets for CATE estimation. Furthermore, methods that utilize Gaussian processes \citep{Alaa2017,Alaa2018}, deep kernel learning \citep{Zhang2020}, boosting, tree-based methods \citep{Zeileis2008,Su2009,ImaiSt2011,Kang2012,Lipkovich2011,loh2012,Wager2018,Athey2019,Chatla2020}, nearest neighbor matching, series estimation, and Bayesian additive regression trees have been developed \citep{Hill2011}. Numerous methods employing machine learning algorithms have also been proposed \citep{Li2017,Kallus2017,Powers2017,Subbaswamy2018,Zhao2019,Nie2020,Hahn2020}.

\subsection{Meta-learners}
Certain CATE estimators can be categorized into meta-learners. Representatives are listed below:
\begin{description}[topsep=0pt, itemsep=0pt, partopsep=0pt, leftmargin=*]
\item[The S-learner \citep{Kunzel2019}:] This approach estimates $\mathbb{E}[Y | X, D]$. Using the estimator $\widehat{\mathbb{E}}[Y| X, D]$, we estimate the CATE as $\widehat{\mathbb{E}}[Y| X = x, D = 1] - \widehat{\mathbb{E}}[Y| X = x, D = 0]$.
\item[The T-learner \citep{Kunzel2019}:] This method consists of a two-step procedure: in the first stage, we separately estimate the parameters of linear regression models for $\mathbb{E}[Y^1| X = x]$ and $\mathbb{E}[Y^0| X = x]$; in the second stage, we estimate the CATE by taking the difference of the two estimators.
\item[The X-learner \citep{Kunzel2019}:] This method modifies the T-learner by correcting the estimator using the propensity score $p(D=1|X)$.
\item[The IPW-learner \citep{saito2021open}:] This approach uses a propensity score $p(D=1|X)$ to construct a conditionally unbiased estimator of $f_0(X_i)$ as $\frac{\mathbbm{1}[D_i = 1]Y_i}{p(D_i = 1| X_i)} - \frac{\mathbbm{1}[D_i = 0]Y_i}{p(D_i = 0| X_i)}$. If $p(D_i = 1 | X_i)$ is unknown, we estimate it in some way. Then, we regress $X_i$ on the estimated $f_0(X_i)$.
\item[The DR-learner \citep{Kennedy2020}:] This learner estimates $f_0(X_i)$ by a DR estimator defined as $\frac{\mathbbm{1}[D_i = 1]\left(Y_i- \widehat{\mathbb{E}}[Y^1_i | X_i]\right)}{\hat{p}(D_i = 1| X_i)} - \frac{\mathbbm{1}[D_i = 0]\left(Y_i- \widehat{\mathbb{E}}[Y^0_i | X_i]\right)}{\hat{p}(D_i = 0| X_i)}$
to estimate $f_0(X_i)$, where $\widehat{\mathbb{E}}[Y^d | X]$ is an estimator of $\mathbb{E}[Y^d| X]$, and $\hat{p}(D_i = 1| X_i)$ is an estimator of $p(D_i = 1 | X_i)$. Then, we regress $X_i$ on the estimated $f_0(X_i)$.
\item[The R-learner \citep{Nie2020}:] This approach employs the Robinson decomposition \citep{Robinson1988} to estimate CATEs.
\end{description}

Many methods for CATE estimation can be categorized into a meta-learner. For instance, our CATE Lasso is an instance of the T-learner. 

The IPW-learner and DR-learner extend the IPW estimator \citep{Horvitz1952} and DR estimator \citep{bang2005drestimation}, originally proposed for ATE estimation, to CATE estimation.

In the context of model selection, the IPW-learner and the DR-learner have garnered attention as in \citet{schuler2018comparison} and \citet{Saito2020}. \citet{ninomiya2022information} and \citet{ninomiya2021selective} combine the IPW-learner with the Lasso. As existing studies have pointed out, the advantages of the IPW-learner stem from the unbiasedness to the risk for $Y^1 - Y^0$, while the T-learner combines two risks for $Y^1$ and $Y^0$ separately. However, the IPW-learner requires the true value for the propensity score $p(D = 1 | X)$. When it is unknown and replace it with an estimate, we cannot enjoy the unbiasedness. Furthermore, under high-dimensional models, the estimation of $p(D = 1 | X)$ itself becomes problematic because we need to assume something like sparsity for estimating $p(D = 1 | X)$ with high-dimensional $X$. The DR-learner further requires an estimate of $\mathbb{E}[Y^d|X]$ to estimate $\mathbb{E}[Y^1- Y^0|X]$, in addition to $p(D = 1 | X)$. In contrast, our method does not suffer from the problem.

\section{Experiments}
\label{sec:exp}
To verify the soundness of the CATE Lasso, we exhibit simulation studies in this section.  We compare our method with the T-Learners using the OLS and Lasso.


First, for regression models' parameters $\bm{\beta}^d$ for $d\in\mathcal{D}$, we generate each element of its first $s_0$ elements $\bm{\beta}^d_1, \dots, \bm{\beta}^d_{s_0}$ from the uniform distribution whose support is $[-10, 10]$, setting the other elements are zero, $\bm{\beta}^d_{s_0 + 1}= \cdots = \bm{\beta}^d_{p} = 0$. We also generate $\bm{\theta}$ from the uniform distribution with a support $[-1, 1]$, which is used to construct $p(D = 1| X)$. 

Let $n$, $p$, and $s_0$ be the sample size, dimension of $X$, and sparsity parameter, defined later. 
Then, we generate $\{(X_i, D_i, Y_i)\}^n_{i=1}$ as follows. We generate $V_i$ from the $p-1$-dimensional standard normal distribution and define $X_i = (1\ \  V^\top_i)^\top$. For $d\in\mathcal{D}$, let us generate $\epsilon^d_i$ from the standard distribution. Then, we obtain $Y^d_i$ as $Y_i = X^\top_i\bm{\beta}^d$. We also construct $p(D = 1 | X = X_i)$ as $p(D = 1| X = X_i) = \frac{1}{1 + \exp(-X^\top_i \bm{\theta} + \eta_i)}$, where $\eta_i$ is generated from the standard distribution. From a Bernoulli distribution with a parameter $p(D = 1 | X = X_i)$, we obtain $D_i \in \mathcal{D}$. Thus, we obtain $(X_i, D_i, Y_i)$ and obtain $\{(X_i, D_i, Y_i)\}^n_{i=1}$ by generating $(X_i, D_i, Y_i)$ $n$ times independently. 

By using $\{(X_i, D_i, Y_i)\}^n_{i=1}$, we predict $Y^1_i - Y^0_i$ given $X_i$. We conduct experiments for $n = 500$, $p = 300, 1000$, and $s_0 = 10, 25, 50$. When $p = 300$, $p > n$ does not hold, but we use this setting for confirming the effectiveness of the CATE Lasso.

We conduct each experiment $100$ times and show the root mean squared errors of the CATE Lasso, the OLS, and the Lasso in Figure~\ref{fig:exp1} and \ref{fig:exp2}. 

We can confirm that the CATE Lasso shows preferable performances in all cases. Because we do not assume sparsity for each potential outcome, the Lasso does not perform well. As $s_0$ increases, the performance of the CATE Lasso approaches to the OLS, which is an expected behavior because as $s_0$ increases, we cannot exploit the implicit sparsity. 

We also show the additional results, such as experiments with the IPW-Learner, and a semi-synthetic datasets in Appendix~\ref{appdx:exp}.

\section{Conclusion}
We examined CATE estimation using a high-dimensional linear regression model. By assuming implicit sparsity, which arises from the difference between two potential outcome linear regression models, we proposed the CATE Lasso estimators. For these estimators, we demonstrated several theoretical properties, such as consistency. Subsequently, we presented experimental results to validate our proposed estimators. Our proposed estimators represent novel, simple, and practical approaches for CATE estimation. An open issue for future work is confidence intervals and the semiparametric efficiency of these estimators.

\bibliographystyle{iclr2024_conference}
\bibliography{arXiv.bbl}

 \clearpage

\onecolumn
\appendix

\section{Theoretical Results with Fixed Design}
\label{sec:theoretical_fix}
We defined our CATE Lasso estimator in the previous section. This section provides several theoretical properties of our CATE Lasso estimator with fixed design, i.e., $\widetilde{\bm{X}}$ and $\mathbb{D}$ are non-random variables. Our analysis in this section and Sections basically follow those in \citet{vandeGeer2014}. 

Following Section~6.2 in \citet{Buhlmann2011}, we investigate the convergence of the CATE Lasso estimator $\widehat{\bm{\beta}}$ for $\bm{\beta}$ and the estimator for $f_0(x)$. We derive an upper bound for the estimation error under an appropriately chosen $\lambda$, namely \emph{oracle inequality}, which implies the consistency for the estimator.

Because the choice of $\delta$ does not affect the optimization\footnote{We can include the choice of $\delta$ in the choice of $\lambda$}, we omit it in the proof and consider the following equivalent optimization problem:
\begin{align*}
\widehat{\bm{\beta}} &:= \argmin_{\bm{\beta}\in\mathbb{R}^p} \left\{
\left\|\widetilde{\mathbb{Y}}^1 - \widetilde{\bm{X}}^1\left(\bm{\beta}  + \widehat{\bm{\beta}}^0\right) \right\|^2_2 + \lambda\left\|\bm{\beta}\right\|_1\right\}.
\end{align*}

\paragraph{Basic inequality.}
As a preliminary, we show the following basic inequality, as well as Lemma~6.1 in \citet{Buhlmann2011}. This inequality is utilized in analyses in the following parts. 

\begin{lemma}[Basic inequality. Corresponding to Lemma~6.1 in \citet{Buhlmann2011}] \label{lem:basic_ineq}
    \begin{align*}
        &\left\| \widetilde{\bm{X}}^1\left(\widehat{\bm{\beta}} - \bm{\beta}_0\right) \right\|^2_2 / n  + \lambda  \left\| \widehat{\bm{\beta}} \right\|_1\leq 2\left(\widetilde{\bm{\varepsilon}}^\dagger\right)^\top \widetilde{\bm{X}}^1\left(\widehat{\bm{\beta}} - \bm{\beta}_0\right) / n  + \lambda \|\bm{\beta}_0\|_1,
    \end{align*}
    where 
    \begin{align*}
        \widetilde{\bm{\varepsilon}}^\dagger = \bm{\varepsilon}^1 - \widetilde{\bm{X}}^1\left(\left(\widetilde{\bm{X}}^0\right)^\top\widetilde{\bm{X}}^0\right)^{\dagger}\left(\widetilde{\bm{X}}^0\right)^\top\bm{\varepsilon}^0.
    \end{align*}
\end{lemma}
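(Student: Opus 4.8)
\textbf{Proof proposal for the Basic Inequality (Lemma~\ref{lem:basic_ineq}).}

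The plan is to follow the classical ``basic inequality'' argument for the Lasso (Lemma~6.1 in \citet{Buhlmann2011}), adapted to the fact that our response is $\widetilde{\mathbb{Y}}^1$ shifted by the interpolating estimator $\widehat{\bm{\beta}}^0$. First I would write down the optimality of $\widehat{\bm{\beta}}$ in the reduced problem
\begin{align*}
\widehat{\bm{\beta}} = \argmin_{\bm{\beta}\in\mathbb{R}^p}\left\{ \left\|\widetilde{\mathbb{Y}}^1 - \widetilde{\bm{X}}^1\left(\bm{\beta} + \widehat{\bm{\beta}}^0\right)\right\|_2^2 / n + \lambda\|\bm{\beta}\|_1\right\},
\end{align*}
so that evaluating the objective at $\widehat{\bm{\beta}}$ and at the true value $\bm{\beta}_0$ gives
\begin{align*}
\left\|\widetilde{\mathbb{Y}}^1 - \widetilde{\bm{X}}^1\left(\widehat{\bm{\beta}} + \widehat{\bm{\beta}}^0\right)\right\|_2^2/n + \lambda\|\widehat{\bm{\beta}}\|_1 \leq \left\|\widetilde{\mathbb{Y}}^1 - \widetilde{\bm{X}}^1\left(\bm{\beta}_0 + \widehat{\bm{\beta}}^0\right)\right\|_2^2/n + \lambda\|\bm{\beta}_0\|_1.
\end{align*}

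Next I would substitute the data-generating equation for the treated subsample. Under the separability structure one has $\widetilde{\mathbb{Y}}^1 = \widetilde{\bm{X}}^1\bm{\beta}^1_0 + \bm{\varepsilon}^1$ and, since $\bm{\beta}_0 = \bm{\beta}^1_0 - \bm{\beta}^0_0$, the quantity $\widetilde{\mathbb{Y}}^1 - \widetilde{\bm{X}}^1(\bm{\beta}_0 + \widehat{\bm{\beta}}^0) = \widetilde{\bm{X}}^1(\bm{\beta}^0_0 - \widehat{\bm{\beta}}^0) + \bm{\varepsilon}^1$. Plugging in the analytic form $\widehat{\bm{\beta}}^0 = \{(\widetilde{\bm{X}}^0)^\top\widetilde{\bm{X}}^0\}^\dagger(\widetilde{\bm{X}}^0)^\top\widetilde{\mathbb{Y}}^0$ together with $\widetilde{\mathbb{Y}}^0 = \widetilde{\bm{X}}^0\bm{\beta}^0_0 + \bm{\varepsilon}^0$, I would identify the ``effective noise'' term. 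The key algebraic point is that the residual $\widetilde{\mathbb{Y}}^1 - \widetilde{\bm{X}}^1(\bm{\beta}_0 + \widehat{\bm{\beta}}^0)$ equals exactly $\widetilde{\bm{\varepsilon}}^\dagger = \bm{\varepsilon}^1 - \widetilde{\bm{X}}^1\{(\widetilde{\bm{X}}^0)^\top\widetilde{\bm{X}}^0\}^\dagger(\widetilde{\bm{X}}^0)^\top\bm{\varepsilon}^0$ up to a term in the range of $\widetilde{\bm{X}}^1$ that is annihilated by the projection structure; more precisely, on the treated rows $\widetilde{\bm{X}}^1(\bm{\beta}^0_0 - \widehat{\bm{\beta}}^0) = -\widetilde{\bm{X}}^1\{(\widetilde{\bm{X}}^0)^\top\widetilde{\bm{X}}^0\}^\dagger(\widetilde{\bm{X}}^0)^\top\bm{\varepsilon}^0$ plus a bias term $\widetilde{\bm{X}}^1(I - P)\bm{\beta}^0_0$ which vanishes when the column span condition (implicit in Assumption~\ref{asm:finte_pseudo} and the identical-design discussion) holds; this is the step that must be handled carefully.

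With the residual identified as $\widetilde{\bm{\varepsilon}}^\dagger$, I would expand both squared norms as $\|\widetilde{\bm{\varepsilon}}^\dagger - \widetilde{\bm{X}}^1(\widehat{\bm{\beta}} - \bm{\beta}_0)\|_2^2$ versus $\|\widetilde{\bm{\varepsilon}}^\dagger\|_2^2$, so that the cross term $-2(\widetilde{\bm{\varepsilon}}^\dagger)^\top\widetilde{\bm{X}}^1(\widehat{\bm{\beta}} - \bm{\beta}_0)/n$ and the quadratic term $\|\widetilde{\bm{X}}^1(\widehat{\bm{\beta}} - \bm{\beta}_0)\|_2^2/n$ survive while $\|\widetilde{\bm{\varepsilon}}^\dagger\|_2^2/n$ cancels from both sides. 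Rearranging yields precisely the claimed inequality
\begin{align*}
\left\|\widetilde{\bm{X}}^1(\widehat{\bm{\beta}} - \bm{\beta}_0)\right\|_2^2/n + \lambda\|\widehat{\bm{\beta}}\|_1 \leq 2(\widetilde{\bm{\varepsilon}}^\dagger)^\top\widetilde{\bm{X}}^1(\widehat{\bm{\beta}} - \bm{\beta}_0)/n + \lambda\|\bm{\beta}_0\|_1.
\end{align*}
The main obstacle is the bookkeeping in the second step: verifying that the combination of the interpolating estimator, the pseudoinverse, and the two noise vectors collapses exactly into $\widetilde{\bm{\varepsilon}}^\dagger$, and confirming under which conditions the deterministic bias term $\widetilde{\bm{X}}^1(I-P)\bm{\beta}^0_0$ drops out (or can be absorbed); everything after that is the routine ``complete the square'' manipulation identical to the standard Lasso proof.
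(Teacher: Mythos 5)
Your proposal follows exactly the paper's argument: substitute the analytic form of the interpolating estimator and the data-generating equations into the reduced objective so that the residual at $\bm{\beta}_0$ becomes $\widetilde{\bm{\varepsilon}}^\dagger$, then compare the objective at $\widehat{\bm{\beta}}$ and at $\bm{\beta}_0$ and cancel $\|\widetilde{\bm{\varepsilon}}^\dagger\|_2^2$ from both sides. You are in fact more careful than the paper, which silently replaces $\widetilde{\bm{X}}^1\{(\widetilde{\bm{X}}^0)^\top\widetilde{\bm{X}}^0\}^{\dagger}(\widetilde{\bm{X}}^0)^\top\widetilde{\bm{X}}^0\bm{\beta}^0_0$ by $\widetilde{\bm{X}}^1\bm{\beta}^0_0$ in passing from its second to its third display: the deterministic bias term $\widetilde{\bm{X}}^1(I-P)\bm{\beta}^0_0$ you flag is dropped there without comment, and your observation that it vanishes only under an additional range condition (not literally implied by Assumption~\ref{asm:finte_pseudo}) is a genuine point the paper's proof does not address.
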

\begin{proof}
Recall that we estimate $\widehat{\bm{\beta}}$ as
\begin{align*}
\widehat{\bm{\beta}} &:= \argmin_{\bm{\beta}\in\mathbb{R}^p} \left\{
\left\|\widetilde{\mathbb{Y}}^1 - \widetilde{\bm{X}}^1\left(\bm{\beta}  + \widehat{\bm{\beta}}^0\right) \right\|^2_2 + \lambda\left\|\bm{\beta}\right\|_1\right\}\\
&= \argmin_{\bm{\beta}\in\mathbb{R}^p} \left\{\left\|\widetilde{\mathbb{Y}}^1 - \widetilde{\bm{X}}^1\left(\left(\widetilde{\bm{X}}^0\right)^\top\widetilde{\bm{X}}^0\right)^{\dagger}\left(\widetilde{\bm{X}}^0\right)^\top\widetilde{\mathbb{Y}}^0 - \widetilde{\bm{X}}^1\bm{\beta} \right\|^2_2 + \lambda\left\|\bm{\beta}\right\|_1\right\}\\
&= \argmin_{\bm{\beta}\in\mathbb{R}^p} \left\{\left\| \widetilde{\bm{X}}^1\left\{\bm{\beta}^1_0 - \bm{\beta}^0_0\right\} + \bm{\varepsilon}^1 - \widetilde{\bm{X}}^1\left(\left(\widetilde{\bm{X}}^0\right)^\top\widetilde{\bm{X}}^0\right)^{\dagger}\left(\widetilde{\bm{X}}^0\right)^\top\bm{\varepsilon}^0 - \widetilde{\bm{X}}^1\bm{\beta} \right\|^2_2 + \lambda\left\|\bm{\beta}\right\|_1\right\}\\
&= \argmin_{\bm{\beta}\in\mathbb{R}^p} \left\{\left\| \widetilde{\bm{X}}^1\bm{\beta}_0 + \widetilde{\bm{\varepsilon}}^\dagger - \widetilde{\bm{X}}^1\bm{\beta} \right\|^2_2 + \lambda\left\|\bm{\beta}\right\|_1\right\}.
\end{align*}
Since $\widehat{\bm{\beta}}$ minimizes the objective function, we obtain the following inequality
\begin{align*}
&\left\| \widetilde{\bm{X}}^1\bm{\beta}_0 + \widetilde{\bm{\varepsilon}}^\dagger - \widetilde{\bm{X}}^1\widehat{\bm{\beta}} \right\|^2_2 + \lambda\left\|\widehat{\bm{\beta}}\right\|_1\leq \left\| \widetilde{\bm{X}}^1\bm{\beta}_0 + \widetilde{\bm{\varepsilon}}^\dagger - \widetilde{\bm{X}}^1\bm{\beta}_0 \right\|^2_2 + \lambda\left\|\bm{\beta}_0\right\|_1.
\end{align*}
Then, a simple calculation yields the statement.
\end{proof}

The term 
\[2\left(\widetilde{\bm{\varepsilon}}^\dagger\right)^\top \widetilde{\bm{X}}^1\left(\widehat{\bm{\beta}} - \bm{\beta}_0\right) / n,\]
in Lemma \ref{lem:basic_ineq} is referred to as the \emph{empirical process} part of this problem \citep{Buhlmann2011}. By bounding the empirical process part, 
we can bound $\| \widetilde{\bm{X}}^1(\widehat{\bm{\beta}} - \bm{\beta}_0) \|^2_2$ and $\lambda  \| \widehat{\bm{\beta}} - \bm{\beta}_0 \|_1$ from the basic inequality. 

\paragraph{Concentration inequality regarding the error term.}
For each $j\in\mathcal{P}$, to bound $(\max_{1\leq j \leq p} 2 |( \widetilde{\bm{\varepsilon}}^\dagger)^\top \widetilde{\mathbb{X}}^1_j|)\| \widehat{\bm{\beta}} - \bm{\beta}_0 \|_1$, we introduce the following event:
\begin{align*}
    \mathcal{F}_{\lambda_0} := \left\{ \max_{1\leq j \leq p} 2 \left|\left(\widetilde{\bm{\varepsilon}}^\dagger\right)^\top \widetilde{\mathbb{X}}^1_j\right| / n \leq \lambda_0\right\},
\end{align*}
for arbitrary $2\lambda_0 \leq \lambda$. For a suitable value of $\lambda_0$ and sub-Gaussian errors $\widetilde{\bm{\varepsilon}}^\dagger = (\widetilde{{\varepsilon}}^\dagger_i)_{i=1}^n$, we show that the event $\mathcal{F}_{\lambda_0}$ holds with large probability. Let us denote the diagonal elements of the Gram matrix $\widehat{\Sigma}^1 := (\widetilde{\bm{X}}^1)^\top \widetilde{\bm{X}}^1 / n$, by $\left(\widehat{\sigma}^1_j\right)^2 := \widehat{\Sigma}^1_{j, j}$, for $j\in\mathcal{P}$. Then, we show the following lemma.
\begin{lemma}[Corresponding to Lemma 6.2. in \citet{Buhlmann2011}]
\label{lem:concent}
Suppose that $\varepsilon_i$ follows a sub-Gaussian distribution with a variance $(\sigma_\varepsilon)^2$.
    Also suppose that $\left(\widehat{\sigma}^1_j\right)^2 \leq M$ for all $j \in \mathcal{P}$ and some $0 < M < \infty$. Then, for any $t > 0$ and for 
    \begin{align*}
        \lambda_0 := 2M\sigma^\dagger_\varepsilon  \sqrt{\frac{t^2 + 2 \log p}{n}},
    \end{align*}
    we have
    \begin{align*}
        \mathbb{P}\left(\mathcal{F}_{\lambda_0}\right) \geq 1 - 2\exp\left( -t^2 / 2 \right),
    \end{align*}
    where
    \begin{align*}
        (\sigma^\dagger_\varepsilon)^2 := (\sigma^1_\varepsilon)^2 +  \mathrm{tr}
        \left(\{(\widetilde{\bm{X}}^0)^\top \widetilde{\bm{X}}^0\}^\dagger(\widetilde{\bm{X}}^1)^\top \widetilde{\bm{X}}^1\right)(\sigma^0_\varepsilon)^2
    \end{align*}
\end{lemma}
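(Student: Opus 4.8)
The plan is to run the standard empirical-process bound from Lasso theory (Lemma~6.2 of \citet{Buhlmann2011}), the only new feature being that the effective noise $\widetilde{\bm{\varepsilon}}^\dagger$ is not a single i.i.d.\ vector but, in the fixed-design setting, a \emph{fixed} linear image of the two independent noise vectors $\bm{\varepsilon}^1$ and $\bm{\varepsilon}^0$; so the sub-Gaussianity of each scalar $(\widetilde{\bm{\varepsilon}}^\dagger)^\top\widetilde{\mathbb{X}}^1_j$ will be inherited from the independence of $\bm{\varepsilon}^1,\bm{\varepsilon}^0$, not from independence of the (correlated) coordinates of $\widetilde{\bm{\varepsilon}}^\dagger$. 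First I would fix $j\in\mathcal{P}$ and rewrite the $j$-th coordinate of the empirical process as
\[
\big(\widetilde{\bm{\varepsilon}}^\dagger\big)^\top \widetilde{\mathbb{X}}^1_j = \langle \bm{\varepsilon}^1, b_j\rangle - \langle \bm{\varepsilon}^0, c_j\rangle,
\]
where $b_j := \widetilde{\mathbb{X}}^1_j$ is the $j$-th column of $\widetilde{\bm{X}}^1$ and $c_j := \widetilde{\bm{X}}^0\big((\widetilde{\bm{X}}^0)^\top\widetilde{\bm{X}}^0\big)^\dagger(\widetilde{\bm{X}}^1)^\top b_j$. Since $\bm{\varepsilon}^1,\bm{\varepsilon}^0$ are independent with independent, centered, sub-Gaussian coordinates of variance proxies $(\sigma^1_\varepsilon)^2$ and $(\sigma^0_\varepsilon)^2$, the scalar above is centered sub-Gaussian with variance proxy $(\sigma^1_\varepsilon)^2\|b_j\|_2^2 + (\sigma^0_\varepsilon)^2\|c_j\|_2^2$, by additivity of variance proxies over independent summands.

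The second step is to bound $\|b_j\|_2$ and $\|c_j\|_2$ uniformly in $j$. By definition $\|b_j\|_2^2 = n(\widehat{\sigma}^1_j)^2 \le nM^2$, using the bound on the diagonal Gram entries (as in Assumption~\ref{asm:a1_vandegeer}). For $c_j$, set $A := (\widetilde{\bm{X}}^0)^\top\widetilde{\bm{X}}^0$, which is symmetric positive semidefinite; the Moore--Penrose identity $A^\dagger A A^\dagger = A^\dagger$ gives
\[
\|c_j\|_2^2 = b_j^\top \widetilde{\bm{X}}^1 A^\dagger (\widetilde{\bm{X}}^1)^\top b_j \le \|b_j\|_2^2\,\lambda_{\max}\big(\widetilde{\bm{X}}^1 A^\dagger(\widetilde{\bm{X}}^1)^\top\big) \le \|b_j\|_2^2\,\mathrm{tr}\big(A^\dagger(\widetilde{\bm{X}}^1)^\top\widetilde{\bm{X}}^1\big),
\]
where the first inequality uses that $\widetilde{\bm{X}}^1 A^\dagger(\widetilde{\bm{X}}^1)^\top$ is positive semidefinite (since $A^\dagger$ is), the second that $\lambda_{\max}\le\mathrm{tr}$ for positive semidefinite matrices, and the equality is cyclicity of the trace. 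Combining, the variance proxy of $\big(\widetilde{\bm{\varepsilon}}^\dagger\big)^\top \widetilde{\mathbb{X}}^1_j / n$ is at most $M^2(\sigma^\dagger_\varepsilon)^2/n$ for every $j$, with $(\sigma^\dagger_\varepsilon)^2$ exactly the quantity in the statement.

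Finally I would apply the scalar sub-Gaussian tail bound $\mathbb{P}(|Z|>u)\le 2\exp\big(-u^2/(2v)\big)$ for a mean-zero sub-Gaussian $Z$ with variance proxy $v$, take a union bound over $j=1,\dots,p$, and substitute $u = \lambda_0/2 = M\sigma^\dagger_\varepsilon\sqrt{(t^2+2\log p)/n}$, so that $u^2\big/\big(2 M^2(\sigma^\dagger_\varepsilon)^2/n\big) = (t^2+2\log p)/2$. The union bound then yields $\mathbb{P}(\mathcal{F}^c_{\lambda_0}) \le 2p\exp\big(-(t^2+2\log p)/2\big) = 2\exp(-t^2/2)$, which is the claimed bound.

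I expect the only genuinely non-routine step to be the pseudoinverse bookkeeping in the second paragraph — checking that $\widetilde{\bm{X}}^1 A^\dagger(\widetilde{\bm{X}}^1)^\top$ is symmetric positive semidefinite and collapsing $\|c_j\|_2^2$ via $A^\dagger A A^\dagger = A^\dagger$ so that precisely the trace entering $(\sigma^\dagger_\varepsilon)^2$ appears — together with being explicit about the sub-Gaussian convention, so that ``variance $(\sigma_\varepsilon)^2$'' may legitimately be used as a variance proxy. Everything else is the textbook Lasso concentration computation.
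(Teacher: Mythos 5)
Your proposal is correct and follows essentially the same route as the paper: decompose $(\widetilde{\bm{\varepsilon}}^\dagger)^\top\widetilde{\mathbb{X}}^1_j$ into its independent $\bm{\varepsilon}^1$ and $\bm{\varepsilon}^0$ contributions, apply the scalar sub-Gaussian tail bound, and finish with a union bound over $j\in\mathcal{P}$. In fact your write-up is more complete than the paper's, which merely asserts that the normalized quantity is sub-Gaussian; your explicit control of the variance proxy via $A^\dagger A A^\dagger = A^\dagger$ and $\lambda_{\max}\le \mathrm{tr}$ is exactly the bookkeeping needed to see why the trace term in $(\sigma^\dagger_\varepsilon)^2$ appears.
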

\begin{proof}
    We first rewrite the term as
    \begin{align*}
        &\left(\widetilde{\bm{\varepsilon}}^\dagger\right)^\top \widetilde{\mathbb{X}}^1_j / \sqrt{n \left(\sigma^\dagger_\varepsilon\right)^2\left(\widehat{\sigma}^1_j\right)^2}\\
        & = \sum^n_{i=1} {\varepsilon}_i\mathbbm{1}[D_i = 1]X^1_{i, j} / \sqrt{n \left(\sigma^\dagger_\varepsilon\right)^2\left(\widehat{\sigma}^1_j\right)^2} +  \widetilde{\bm{X}}^1\left(\left(\widetilde{\bm{X}}^0\right)^\top\widetilde{\bm{X}}^0\right)^{\dagger}\sum^n_{i=1}\left(\widetilde{\bm{X}}^0\right)^\top\varepsilon_i\mathbbm{1}[D_i = 0]X^1_{i, j} / \sqrt{n \left(\sigma^\dagger_\varepsilon\right)^2\left(\widehat{\sigma}^1_j\right)^2}.
    \end{align*}
    Because $\varepsilon^\dagger_i$ follows a sub-Gaussian distribution with the fixed-design where recall that $D_i$ and $X^d_{i, j}$ are non-random,     
    the rewritten term also follows a sub-Gaussian distribution and achieve
    \begin{align*}
        &\mathbb{P}\left(\max_{1\leq j \leq p} \left|\left(\widetilde{\bm{\varepsilon}}^\dagger\right)^\top \widetilde{\mathbb{X}}^1_j\right| / \sqrt{n \left(\sigma^\dagger_\varepsilon\right)^2\left(\widehat{\sigma}^1_j\right)^2}\leq  \sqrt{t^2 + 2\log p}\right)\\
        &\leq \sum_{j\in\mathcal{P}} \mathbb{P}\left(\left|\left(\widetilde{\bm{\varepsilon}}^\dagger\right)^\top \widetilde{\mathbb{X}}^1_j\right| / \sqrt{n \left(\sigma^\dagger_\varepsilon\right)^2\left(\widehat{\sigma}^1_j\right)^2}\leq  \sqrt{t^2 + 2\log p}\right)\\
        &\leq 2 p \exp\left( - \frac{t^2 + 2\log p}{2} \right) = 2 \exp\left(- \frac{t^2}{2}\right),
    \end{align*}
    where the last inequality follows from the definition of a sub-Gaussian random variable. 
\end{proof}

Note that from Lemma~\ref{lem:concent}, for any $t > 0$ and for 
    \begin{align*}
        \lambda := M\sigma^\dagger_\varepsilon \sqrt{\frac{t^2 + 2 \log p}{n}}.
    \end{align*}
    with probability $1 - 2\exp(-t^2/2)$, we have
\[\left\| \widetilde{\bm{X}}^1\left(\widehat{\bm{\beta}} - \bm{\beta}_0\right) \right\|^2_2 / n \leq \left\| \widetilde{\bm{X}}^1\left(\widehat{\bm{\beta}} - \bm{\beta}_0\right) \right\|^2_2 / n  + 2 \lambda  \left\| \widehat{\bm{\beta}} \right\|_1 \leq \lambda\left\|  \widehat{\bm{\beta}} - \bm{\beta}_0 \right\|_1 / 2 + \lambda \|\bm{\beta}_0\|_1,\]
which implies
\[\left\| \widetilde{\bm{X}}^1\left(\widehat{\bm{\beta}} - \bm{\beta}_0\right) \right\|^2_2 / n \leq 3\lambda \|\bm{\beta}_0\|_1.\]

\paragraph{Compatibility condition.}
To derive a tight upper bound for the estimation error, we follow \citet{Buhlmann2011} in asserting that the compatibility condition plays a crucial role in identifiability. To define the compatibility condition, for a $p\times 1$ vector $\bm{\beta}$ and a subset $\mathcal{S}_0 \subseteq \mathcal{P}$, we introduce a notation that denotes the sparsity of $\bm{\beta}_0$. Let $\mathcal{S}_0 \subseteq \mathcal{P}$ be a set such that
\begin{align*}
    \mathcal{S}_0 := \left\{j: \beta_{0, j} \neq 0\right\},
\end{align*}
where $\beta_{0, j}$ is the $j$-th element of $\bm{\beta}_0$. 
For an index set $\mathcal{S}_0 \subset \mathcal{P}$, let $\bm{\beta}_{\mathcal{S}_0}$ and $\bm{\beta}_{\mathcal{S}^c_0}$ be vectors whose $j$-th elements are defined as follows:
\begin{align*}
    \beta_{j, \mathcal{S}_0} := \beta_{j}\mathbbm{1}[j\in \mathcal{S}_0],\qquad \beta_{j, \mathcal{S}^c_0} := \beta_{j}\mathbbm{1}[j\notin \mathcal{S}_0],
\end{align*}
respectively, where $\beta_j$ is the $j$-th element of $\bm{\beta}$. Therefore, $\bm{\beta} = \bm{\beta}_{\mathcal{S}_0} + \bm{\beta}_{\mathcal{S}^c_0}$ holds. 

Note that from the definition of the individual and common parameters, it holds that
\begin{align*}
    \beta_{j, \mathcal{S}_0} = \alpha^1_j - \alpha^0_j\quad \mathrm{and}\quad  \beta_{j, \mathcal{S}^c_0} = \gamma_j - \gamma_j = 0.
\end{align*}

We also provide the compatibility condition, which is commonly used in the analysis for Lasso.
\begin{definition}[Compatibility condition. From (6.4) in \citet{Buhlmann2011}]
    We say that the compatibility condition holds for the set $S_0$, if there is a positive constant $\phi_0 > 0$ such that for all $\bm{\beta}$ satisfying $\|\bm{\beta}_{\mathcal{S}^c_0}\|_1 \leq 3 \|\bm{\beta}_{\mathcal{S}_0}\|_1$, it holds that
    \begin{align*}
        \|\bm{\beta}_{\mathcal{S}_0}\|^2_1 \leq s_0 \bm{\beta}^\top \widehat{\Sigma} \bm{\beta} / \left(\phi_0\right)^2.
    \end{align*}
    We refer to $\phi_0$ as the compatibility constant.
\end{definition}

\paragraph{Oracle inequality and consistency.}
Lastly, we derive an oracle inequality for the CATE Lasso estimator. 
Using the compatibility condition, we make Assumption~\ref{asm:a1_vandegeer}. 

Then, we show the following oracle inequality and consistency result in Theorem~\ref{thm:oracle} with its proof in Appendix~\ref{appdx:oracle_inequality}.


\section{Proof of Theorem~\ref{thm:oracle}}
\label{appdx:oracle_inequality}
To prove Theorem~\ref{thm:oracle}, we show the following lemma, which directly yields the statements in Theorem~\ref{thm:oracle}.
\begin{lemma}
\label{lem:oracle}
    Under the same conditions in Theorem~\ref{thm:oracle}, we have
    \begin{align*}
        \left\|\widetilde{\bm{X}}^1\left(\widehat{\bm{\beta}} - \bm{\beta}_0\right) \right\|^2_2 / n + \lambda\left\|\widehat{\bm{\beta}} - \bm{\beta}_0\right\|_1 \leq 4 \lambda^2 s_0 / \left( \phi_0 \right)^2.
    \end{align*}
\end{lemma}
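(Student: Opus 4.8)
The plan is to follow the classical Lasso oracle-inequality argument from Section~6.2 of \citet{Buhlmann2011}, adapting it to the CATE Lasso via the basic inequality of Lemma~\ref{lem:basic_ineq} and the concentration result of Lemma~\ref{lem:concent}. First I would work on the event $\mathcal{F}_{\lambda_0}$ with $2\lambda_0 \le \lambda$, which by Lemma~\ref{lem:concent} has probability at least $1 - 2\exp(-t^2/2)$ once $\lambda$ is chosen as in the theorem statement (noting the factor-of-$3$ slack built into the lower bound on $\lambda$, so that $\lambda \ge 3\lambda_0$ comfortably covers $2\lambda_0 \le \lambda$). On this event, the empirical-process term in Lemma~\ref{lem:basic_ineq} is controlled by $2|(\widetilde{\bm{\varepsilon}}^\dagger)^\top \widetilde{\bm{X}}^1(\widehat{\bm{\beta}}-\bm{\beta}_0)|/n \le \lambda_0 \|\widehat{\bm{\beta}}-\bm{\beta}_0\|_1 \le (\lambda/2)\|\widehat{\bm{\beta}}-\bm{\beta}_0\|_1$, using H\"older's inequality and the definition of $\mathcal{F}_{\lambda_0}$.

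Next I would plug this into the basic inequality to get
\[
\left\|\widetilde{\bm{X}}^1(\widehat{\bm{\beta}}-\bm{\beta}_0)\right\|_2^2/n + \lambda\|\widehat{\bm{\beta}}\|_1 \le \tfrac{\lambda}{2}\|\widehat{\bm{\beta}}-\bm{\beta}_0\|_1 + \lambda\|\bm{\beta}_0\|_1 .
\]
Then I would split $\|\widehat{\bm{\beta}}-\bm{\beta}_0\|_1 = \|(\widehat{\bm{\beta}}-\bm{\beta}_0)_{\mathcal{S}_0}\|_1 + \|\widehat{\bm{\beta}}_{\mathcal{S}_0^c}\|_1$ (using $\bm{\beta}_{0,\mathcal{S}_0^c}=0$ by implicit sparsity), and use the triangle inequalities $\|\widehat{\bm{\beta}}\|_1 \ge \|\widehat{\bm{\beta}}_{\mathcal{S}_0}\|_1 + \|\widehat{\bm{\beta}}_{\mathcal{S}_0^c}\|_1 \ge \|\bm{\beta}_{0,\mathcal{S}_0}\|_1 - \|(\widehat{\bm{\beta}}-\bm{\beta}_0)_{\mathcal{S}_0}\|_1 + \|\widehat{\bm{\beta}}_{\mathcal{S}_0^c}\|_1$ together with $\|\bm{\beta}_0\|_1 = \|\bm{\beta}_{0,\mathcal{S}_0}\|_1$. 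Rearranging cancels the $\|\bm{\beta}_{0,\mathcal{S}_0}\|_1$ terms and yields
\[
\left\|\widetilde{\bm{X}}^1(\widehat{\bm{\beta}}-\bm{\beta}_0)\right\|_2^2/n + \tfrac{\lambda}{2}\|\widehat{\bm{\beta}}_{\mathcal{S}_0^c}\|_1 \le \tfrac{3\lambda}{2}\|(\widehat{\bm{\beta}}-\bm{\beta}_0)_{\mathcal{S}_0}\|_1 .
\]
This both shows the cone condition $\|(\widehat{\bm{\beta}}-\bm{\beta}_0)_{\mathcal{S}_0^c}\|_1 \le 3\|(\widehat{\bm{\beta}}-\bm{\beta}_0)_{\mathcal{S}_0}\|_1$ (so the compatibility condition of Assumption~\ref{asm:a1_vandegeer} applies to $\widehat{\bm{\beta}}-\bm{\beta}_0$) and gives the key bound to iterate on.

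Finally I would apply the compatibility condition: $\|(\widehat{\bm{\beta}}-\bm{\beta}_0)_{\mathcal{S}_0}\|_1 \le \sqrt{s_0}\,\|\widetilde{\bm{X}}^1(\widehat{\bm{\beta}}-\bm{\beta}_0)\|_2 / (\sqrt{n}\,\phi_0)$, substitute into the displayed inequality, and use the elementary $ab \le a^2/2 + b^2/2$ (with $a = \|\widetilde{\bm{X}}^1(\widehat{\bm{\beta}}-\bm{\beta}_0)\|_2/\sqrt{n}$ and $b$ proportional to $\lambda\sqrt{s_0}/\phi_0$) to absorb the prediction-norm term on the left, leaving $\|\widetilde{\bm{X}}^1(\widehat{\bm{\beta}}-\bm{\beta}_0)\|_2^2/n + \lambda\|\widehat{\bm{\beta}}-\bm{\beta}_0\|_1 \le 4\lambda^2 s_0/\phi_0^2$ after a routine regrouping of the $\ell_1$ pieces (adding back $\tfrac{\lambda}{2}\|(\widehat{\bm{\beta}}-\bm{\beta}_0)_{\mathcal{S}_0}\|_1$ to both sides). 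The main obstacle is bookkeeping rather than a genuine difficulty: the only CATE-specific subtlety is that the ``noise'' driving the empirical process is $\widetilde{\bm{\varepsilon}}^\dagger$, which already incorporates the interpolating-estimator error through $\widetilde{\bm{X}}^1((\widetilde{\bm{X}}^0)^\top\widetilde{\bm{X}}^0)^\dagger(\widetilde{\bm{X}}^0)^\top\bm{\varepsilon}^0$; since the design is fixed and Assumption~\ref{asm:finte_pseudo} guarantees finiteness, Lemma~\ref{lem:concent} already packages its sub-Gaussianity with the inflated variance $(\sigma^\dagger_\varepsilon)^2$, so no extra work is needed beyond invoking it with the stated $\lambda$. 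I would just need to be careful that the constant slack ($\lambda \ge 3\lambda_0$ versus $2\lambda_0 \le \lambda$) is consistent throughout, which explains the universal constants $C_1, C_2$ appearing in Theorem~\ref{thm:oracle}.
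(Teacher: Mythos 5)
Your proposal is correct and follows essentially the same route as the paper's proof: the basic inequality on the event $\mathcal{F}_{\lambda_0}$, the $\ell_1$ splitting over $\mathcal{S}_0$ and $\mathcal{S}_0^c$ using $\bm{\beta}_{0,\mathcal{S}_0^c}=0$ to derive the cone condition, then the compatibility condition combined with an elementary quadratic inequality to absorb the prediction-norm term. The only differences are normalization conventions (the paper keeps a factor of $2$ on the prediction norm so one copy survives the absorption step, yielding the explicit constant $4$), which your universal constants $C_1, C_2$ cover.
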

\begin{proof}
On $\mathcal{F}_{\lambda_0}$, by the basic inequality, for $\lambda \geq 2 \lambda_0$, we have
\begin{align}
\label{eq:target1111}
    &2\left\| \widetilde{\bm{X}}^1\left(\widehat{\bm{\beta}} - \bm{\beta}_0\right) \right\|^2_2 / n + 2\lambda  \left\| \widehat{\bm{\beta}}\right\|_1 \leq \lambda\left\| \widehat{\bm{\beta}} - \bm{\beta}_0 \right\|_1 + 2\lambda \|\bm{\beta}_0\|_1. 
\end{align}

On the last term in the LHS of \eqref{eq:target1111}, using the triangle inequality and the property $\|\widehat{\bm{\beta}}\|_1 = \|\widehat{\bm{\beta}}_{\mathcal{S}_0}\|_1 + \|\widehat{\bm{\beta}}_{\mathcal{S}_0^c}\|_1$, we have
\begin{align*}
    \left\| \widehat{\bm{\beta}}\right\|_1 &\geq \left\| \bm{\beta}_{0, \mathcal{S}_0}\right\|_1 - \left\| \widehat{\bm{\beta}}_{\mathcal{S}_0} - \bm{\beta}_{0, \mathcal{S}_0}\right\|_1 + \left\| \widehat{\bm{\beta}}_{\mathcal{S}^c_0}\right\|_1.
\end{align*}

On the first term in the RHS of \eqref{eq:target1111}, we have
\begin{align*}
    \left\| \widehat{\bm{\beta}} - \bm{\beta}_0\right\|_1 &= \left\| \widehat{\bm{\beta}}_{\mathcal{S}_0} - \bm{\beta}_{0, \mathcal{S}_0} \right\|_1 + \left\| \widehat{\bm{\beta}}_{\mathcal{S}_0^c} \right\|_1.
\end{align*}
Note that $\|\bm{\beta}_{0, \mathcal{S}^c_0}\|_1 = 0$. 
Then, we obtain 
 \begin{align*}
        &2\left\| \widetilde{\bm{X}}^1\left(\widehat{\bm{\beta}} - \bm{\beta}_0\right) \right\|^2_2 / n +2\lambda\left\| \bm{\beta}_{0, \mathcal{S}_0}\right\|_1 - 2\lambda\left\| \bm{\beta}_{0, \mathcal{S}_0} - \widehat{\bm{\beta}}_{\mathcal{S}_0} \right\|_1  + 2\lambda\left\| \widehat{\bm{\beta}}_{\mathcal{S}^c_0}\right\|_1\\
        & \leq \lambda\left\| \widehat{\bm{\beta}}_{\mathcal{S}_0} - \bm{\beta}_{0, \mathcal{S}_0} \right\|_1 +  \lambda\left\|\widehat{\bm{\beta}}_{\mathcal{S}^c_0} \right\|_1 + 2\lambda\left\| \bm{\beta}_{0, \mathcal{S}_0}\right\|_1 + 2\lambda\left\| \bm{\beta}_{0, \mathcal{S}^c_0}\right\|_1\\
        & = \lambda\left\| \widehat{\bm{\beta}}_{\mathcal{S}_0} - \bm{\beta}_{0, \mathcal{S}_0} \right\|_1 +  \lambda\left\|\widehat{\bm{\beta}}_{\mathcal{S}^c_0} \right\|_1 + 2\lambda\left\| \bm{\beta}_{0, \mathcal{S}_0}\right\|_1.
\end{align*}
Therefore, we have
  \begin{align*}
        &2\left\| \widetilde{\bm{X}}^1\left(\widehat{\bm{\beta}} - \bm{\beta}_0\right) \right\|^2_2 / n + \lambda  \left\| \widehat{\bm{\beta}}_{\mathcal{S}^c_0}\right\|_1 \leq 3\lambda\left\| \widehat{\bm{\beta}}_{\mathcal{S}_0} - \bm{\beta}_{0, \mathcal{S}_0} \right\|_1.
 \end{align*}

Therefore, we bound $\left\|\widetilde{\bm{X}}^1\left(\widehat{\bm{\beta}} - \bm{\beta}_0\right) \right\|^2_2 / n + \lambda\left\|\widehat{\bm{\beta}} - \bm{\beta}_0\right\|_1$ as
    \begin{align*}
        &2\left\|\widetilde{\bm{X}}^1\left(\widehat{\bm{\beta}} - \bm{\beta}_0\right) \right\|^2_2 / n + \lambda\left\|\widehat{\bm{\beta}} - \bm{\beta}_0\right\|_1\\
        &\leq 2\left\|\widetilde{\bm{X}}^1\left(\widehat{\bm{\beta}} - \bm{\beta}_0\right) \right\|^2_2 / n + \lambda\left\|\widehat{\bm{\beta}}_{\mathcal{S}^c_0}\right\|_1 + \lambda\left\|\widehat{\bm{\beta}}_{\mathcal{S}_0} - \bm{\beta}_{0, \mathcal{S}_0}\right\|_1\\
        &\leq 4\lambda\left\|\widehat{\bm{\beta}}_{\mathcal{S}_0} - \bm{\beta}_{0, \mathcal{S}_0}\right\|_1.
    \end{align*}
Lastly, from the compatibility condition, we have
\begin{align*}
    &4\lambda\left\|\widehat{\bm{\beta}}_{\mathcal{S}_0} - \bm{\beta}_{0, \mathcal{S}_0}\right\|_1\\
    &\leq  4\lambda \sqrt{s_0}\left\|\widetilde{\bm{X}}^1\left(\widehat{\bm{\beta}}^1 - \bm{\beta}_0^1\right) \right\|_2  / \left( \sqrt{n} \phi_0 \right)\\
    &\leq  \left\|\widetilde{\bm{X}}^1\left(\widehat{\bm{\beta}} - \bm{\beta}_0\right) \right\|^2_2  / n + 4\lambda^2 s_0 / \left( \phi_0 \right)^2,
\end{align*}
where we used $4 uv \leq u^2 + 4v^2$. Therefore, we obtain 
\begin{align*}
    2\left\|\widetilde{\bm{X}}^1\left(\widehat{\bm{\beta}} - \bm{\beta}_0\right) \right\|^2_2 / n + \lambda\left\|\widehat{\bm{\beta}} - \bm{\beta}_0\right\|_1 \leq \left\|\widetilde{\bm{X}}^1\left(\widehat{\bm{\beta}} - \bm{\beta}_0\right) \right\|^2_2  / n + 4\lambda^2 s_0 / \left( \phi_0 \right)^2.
\end{align*}
Thus, from this inequality, the statement holds. 
\end{proof}

\section{Related Work and Open Issues}
\label{appdx;related}

\subsection{Related Work}
Early work on CATEs are \citet{Heckman1997} and \citet{Heckman2005}. \citet{LeeWhang2009} and \citet{Hsu2017} discuss both the estimation and hypothesis testing of the CATE. \citet{Cai2017,Cai2021} also study confidence intervals for high-dimensional cases. \citet{Abrevaya2015} discusses the nonparametric identification of the CATE and proposes the Nadaraya-Watson-based estimator.

With the rise of machine learning algorithms, various methods for CATE estimation have been proposed. Some of them are summarized as meta-learners by \citet{Kunzel2019} (see the following section). 

Neural networks have garnered attention as a method for nonparametric estimation \citep{SchmidtHieber2020}. In causal inference, there is a stream of work that employs neural networks \citep{Johansson2016,Shalit2017,Shi2019,Hassanpour2020Learning,curth2021nonparametric,curth2021inductive}, utilizing methods and properties of neural networks, such as representation learning \citep{bengio2014representation} and multi-task learning \citep{Caruana1997}. \citet{yoon2018ganite} applies the generative adversarial nets for CATE estimation.

Furthermore, methods that utilize Gaussian processes \citep{Alaa2017,Alaa2018}, deep kernel learning \citep{Zhang2020}, boosting, tree-based methods \citep{Zeileis2008,Su2009,ImaiSt2011,Kang2012,Lipkovich2011,loh2012,Wager2018,Athey2019,Chatla2020}, nearest neighbor matching, series estimation, and Bayesian additive regression trees have been developed \citep{Hill2011}. \citet{Gunter2011}, \citet{ImaiSt2011}, and \citet{Imai2013} formulate the CATE estimation problem as a variable selection problem. Numerous methods employing machine learning algorithms have also been proposed \citep{Li2017,Kallus2017,Powers2017,Subbaswamy2018,Zhao2019,Nie2020,Kennedy2020,Hahn2020}.

Finally, other literature on high-dimensional linear regression warrants mention. Beyond the Lasso estimator, numerous approaches for high-dimensional linear regression have been proposed. Under sparsity, methods such as the Ridge \citep{bhlitem137258} and Elastic Net \citep{ZouHastie2005} have been developed. There are also high-dimensional regression approaches that do not involve regularization, known as ridgeless estimation or interpolating estimators \citep{Bartlett2020}. \citet{Bartlett2020} develops the benign-overfitting framework for the interpolating estimator, and \citet{Tsigler2023} demonstrates benign overfitting in ridge regression.

\subsection{Meta-learners}
Certain CATE estimators can be categorized into a meta-learner. Representative meta-learners are listed below:
\begin{description}[topsep=0pt, itemsep=0pt, partopsep=0pt, leftmargin=*]
\item[The S-learner \citep{Kunzel2019}:] This approach estimates $\mathbb{E}[Y | X, D]$. Using the estimator $\widehat{\mathbb{E}}[Y| X, D]$, we estimate the CATE as $\widehat{\mathbb{E}}[Y| X = x, D = 1] - \widehat{\mathbb{E}}[Y| X = x, D = 0]$.
\item[The T-learner \citep{Kunzel2019}:] This method consists of a two-step procedure: in the first stage, we separately estimate the parameters of linear regression models for $\mathbb{E}[Y^1| X = x]$ and $\mathbb{E}[Y^0| X = x]$; in the second stage, we estimate the CATE by taking the difference of the two estimators.
\item[The X-learner \citep{Kunzel2019}:] This method modifies the T-learner by correcting the estimator using the propensity score $p(D=1|X)$.
\item[The IPW-learner \citep{saito2021open}:] This approach uses a propensity score $p(D=1|X)$ to construct a conditionally unbiased estimator of $f_0(X_i)$ as $\frac{\mathbbm{1}[D_i = 1]Y_i}{p(D_i = 1| X_i)} - \frac{\mathbbm{1}[D_i = 0]Y_i}{p(D_i = 0| X_i)}$. If $p(D_i = 1 | X_i)$ is unknown, we estimate it in some way. Then, we regress $X_i$ on the estimated $f_0(X_i)$.
\item[The DR-learner \citep{Kennedy2020}:] This approach estimates $f_0(X_i)$ by using the DR estimator defined as 
\[\frac{\mathbbm{1}[D_i = 1]\left(Y_i- \widehat{\mathbb{E}}[Y^1_i | X_i]\right)}{\hat{p}(D_i = 1| X_i)} - \frac{\mathbbm{1}[D_i = 0]\left(Y_i- \widehat{\mathbb{E}}[Y^0_i | X_i]\right)}{\hat{p}(D_i = 0| X_i)}\]
to estimate $f_0(X_i)$, where $\widehat{\mathbb{E}}[Y^d | X]$ is an estimator of $\mathbb{E}[Y^d| X]$, and $\hat{p}(D_i = 1| X_i)$ is an estimator of $p(D_i = 1 | X_i)$. Then, we regress $X_i$ on the estimated $f_0(X_i)$.
\item[The R-learner \citep{Nie2020}:] This approach employs the Robinson decomposition \citep{Robinson1988}.
\end{description}

Many methods for CATE estimation can be categorized into a meta-learner. For instance, our CATE Lasso is an instance of the T-learner. 

The IPW-learner and DR-learner extend the IPW estimator \citep{Horvitz1952} and DR estimator \citep{bang2005drestimation}, originally proposed for ATE estimation, to CATE estimation.

In the context of model selection, the IPW-learner and the DR-learner have garnered attention as in \citet{schuler2018comparison} and \citet{Saito2020}. \citet{ninomiya2022information} and \citet{ninomiya2021selective} combine the IPW-learner with the Lasso. As existing studies have pointed out, the advantages of the IPW-learner stem from the unbiasedness to the risk for $Y^1 - Y^0$, while the T-learner combines two risks for $Y^1$ and $Y^0$ separately. However, the IPW-learner requires the true value for the propensity score $p(D = 1 | X)$. When it is unknown and replace it with an estimate, we cannot enjoy the unbiasedness. Furthermore, under high-dimensional models, the estimation of $p(D = 1 | X)$ itself becomes problematic because we need to assume something like sparsity for estimating $p(D = 1 | X)$ with high-dimensional $X$. The DR-learner further requires an estimate of $\mathbb{E}[Y^d|X]$ to estimate $\mathbb{E}[Y^1- Y^0|X]$, in addition to $p(D = 1 | X)$. In contrast, our method does not suffer from the problem.

\subsection{Confidence Intervals and Efficiency}
The future direction of this study is to debias the CATE Lasso estimator to obtain confidence intervals, as well as the \emph{debiased} Lasso in \citep{vandeGeer2014}. The debiased åLasso estimator is one of the \emph{post-regularization inference} methods for Lasso-based estimators. By debiasing the parameter, we obtain confidence intervals for the Lasso-based estimators.

\citet{zhangzhang2014} introduces the debiased Lasso, and several existing studies, such as \citet{Javanmard14a} and \citet{vandeGeer2014}, extend the method. Other related work is \citet{Belloni2014}, \citet{Belloni2014Post} and \citet{Belloni2016}. Specifically, \citet{Belloni2014} considers treatment effect estimation as well as ours.

There is a problem when we develop a debiased-Lass estimator in our formulation. Following the approach of \citep{vandeGeer2014}, we can define the debiased CATE Lasso estimator $\widehat{\bm{b}}$ as follows:
\begin{align*}
\widehat{\bm{b}} &:= \widehat{\bm{\beta}} + \widehat{\Theta}^1 \left(\widetilde{\bm{X}}^1\right)^\top \left(\widetilde{\mathbb{Y}}^1 - \widetilde{\bm{X}}^1\widehat{\bm{\beta}}^1\right) / n\\
&\ \ \ \ \ \ \ \ \ \ \ \ \ \ \ \ \ \ \ \ - \widehat{\Theta}^0 \left(\widetilde{\bm{X}}^0\right)^\top \left(\widetilde{\mathbb{Y}}^0 - \widetilde{\bm{X}}^0\widehat{\bm{\beta}}^0\right) / n,
\end{align*}
where $\widehat{\Theta}^d$ is a reasonable approximation for inverses of $\widehat{\Sigma}^d := \left(\widetilde{\bm{X}}^d\right)^\top\widetilde{\bm{X}}^d / n$ for $d\in\{1, 0\}$. 
Then, $
\sqrt{n}\left(\widehat{\bm{b}} - \bm{\beta}_0\right) = \widehat{\Theta}^1 \left(\widetilde{\bm{X}}^1\right)^\top \widetilde{\bm{\varepsilon}}^1 / \sqrt{n} - \widehat{\Theta}^0 \left(\widetilde{\bm{X}}^0\right)^\top \widetilde{\bm{\varepsilon}}^0 / \sqrt{n} + \Delta$
holds, 
where $\Delta := \Delta^1 - \Delta^0$ and for each $d\in\mathcal{D}$, 
\begin{align*}
&\Delta^d := \sqrt{n}\left(\widehat{\Theta}^d\widehat{\Sigma}^d - I\right)\left(\widehat{\bm{\beta}}^d - \bm{\beta}^d_0\right).
\end{align*}
To obtain the debiased estimator $\widehat{\bm{b}}$ with $\sqrt{n}$-consistency, we need to show $\Delta^d = o_p(1)$. In the standard debiased Lasso, we obtain the result by making the product of $\left(\widehat{\Theta}^d\widehat{\Sigma}^d - I\right) = o_p(1)$ and $\left(\widehat{\bm{\beta}}^d - \bm{\beta}^d_0\right) = o_p(1)$. However, in our approach, $\left(\widehat{\bm{\beta}}^d - \bm{\beta}^d_0\right) = o_p(1)$ may not hold; that is, although $\widehat{\bm{\beta}}$ will converge to $\bm{\beta}_0$, $\widehat{\bm{\beta}}^d$ may not converge to $\bm{\beta}^d_0$ in our approach. Therefore, in our approach, it is unknown how to obtain the debiased estimator in our setting. 

Furthermore, efficiency of the estimator is also an important open issue. For example, \citet{Jankova2018} discusses the semiparametric efficiency of the debiased Lasso estimator. However, to uncover the semiparametric efficiency, we conjecture that some techniques used for semiparametric efficiency in treatment effect estimation are required, such as those described in \citet{hahn1998role}.

Among the meta-learners, the DR estimator is frequently discussed within the context of semiparametric efficient CATE estimation \citep{Fan2022}. The DR estimation has a close relationship to the semiparametric efficient influence function \citep{bickel98}. Based on this property, \citet{ChernozhukovVictor2018Dmlf} proposes double machine learning for ATE estimation, and \citet{Fan2022} applies it to semiparametric efficient CATE estimation. How the variance of our estimator relates to that of estimators with the DR estimation remains an open issue.

\subsection{Future Work: Confidence Intervals and Efficiency}
The future direction of this study is to debias the CATE Lasso estimator to obtain confidence intervals, as well as the \emph{debiased} Lasso in \citep{vandeGeer2014}. The debiased Lasso estimator is one of the \emph{post-regularization inference} methods for Lasso-based estimators. By debiasing the parameter, we obtain confidence intervals for the Lasso-based estimators.

\citet{zhangzhang2014} introduces the debiased Lasso, and several existing studies, such as \citet{Javanmard14a} and \citet{vandeGeer2014}, extend the method. Other related work is \citet{Belloni2014}, \citet{Belloni2014Post} and \citet{Belloni2016}. Specifically, \citet{Belloni2014} considers treatment effect estimation as well as ours.

There is a problem when we develop a debiased-Lass estimator in our formulation. Following the approach of \citep{vandeGeer2014}, we can define the debiased CATE Lasso estimator $\widehat{\bm{b}}$ as $
\widehat{\bm{b}} := \widehat{\bm{\beta}} + \widehat{\Theta}^1 \left(\widetilde{\bm{X}}^1\right)^\top \left(\widetilde{\mathbb{Y}}^1 - \widetilde{\bm{X}}^1\widehat{\bm{\beta}}^1\right) / n- \widehat{\Theta}^0 \left(\widetilde{\bm{X}}^0\right)^\top \left(\widetilde{\mathbb{Y}}^0 - \widetilde{\bm{X}}^0\widehat{\bm{\beta}}^0\right) / n$, 
where $\widehat{\Theta}^d$ is a reasonable approximation for inverses of $\widehat{\Sigma}^d := \left(\widetilde{\bm{X}}^d\right)^\top\widetilde{\bm{X}}^d / n$ for $d\in\{1, 0\}$. 
Then, $
\sqrt{n}\left(\widehat{\bm{b}} - \bm{\beta}_0\right) = \widehat{\Theta}^1 \left(\widetilde{\bm{X}}^1\right)^\top \widetilde{\bm{\varepsilon}}^1 / \sqrt{n} - \widehat{\Theta}^0 \left(\widetilde{\bm{X}}^0\right)^\top \widetilde{\bm{\varepsilon}}^0 / \sqrt{n} + \Delta$
holds, 
where $\Delta := \Delta^1 - \Delta^0$ and for each $d\in\mathcal{D}$, 
$\Delta^d := \sqrt{n}\left(\widehat{\Theta}^d\widehat{\Sigma}^d - I\right)\left(\widehat{\bm{\beta}}^d - \bm{\beta}^d_0\right).$
To obtain the debiased estimator $\widehat{\bm{b}}$ with $\sqrt{n}$-consistency, we need to show $\Delta^d = o_p(1)$. In the standard debiased Lasso, we obtain the result by making the product of $\left(\widehat{\Theta}^d\widehat{\Sigma}^d - I\right) = o_p(1)$ and $\left(\widehat{\bm{\beta}}^d - \bm{\beta}^d_0\right) = o_p(1)$. However, in our approach, $\left(\widehat{\bm{\beta}}^d - \bm{\beta}^d_0\right) = o_p(1)$ may not hold; that is, although $\widehat{\bm{\beta}}$ will converge to $\bm{\beta}_0$, $\widehat{\bm{\beta}}^d$ may not converge to $\bm{\beta}^d_0$ in our approach. Therefore, in our approach, it is unknown how to obtain the debiased estimator in our setting. 

Furthermore, efficiency of the estimator is also an important open issue. For example, \citet{Jankova2018} discusses the semiparametric efficiency of the debiased Lasso estimator. However, to uncover the semiparametric efficiency, we conjecture that some techniques used for semiparametric efficiency in treatment effect estimation are required, such as those described in \citet{hahn1998role}.

\begin{figure*}[h]
  \centering
    \includegraphics[width=140mm]{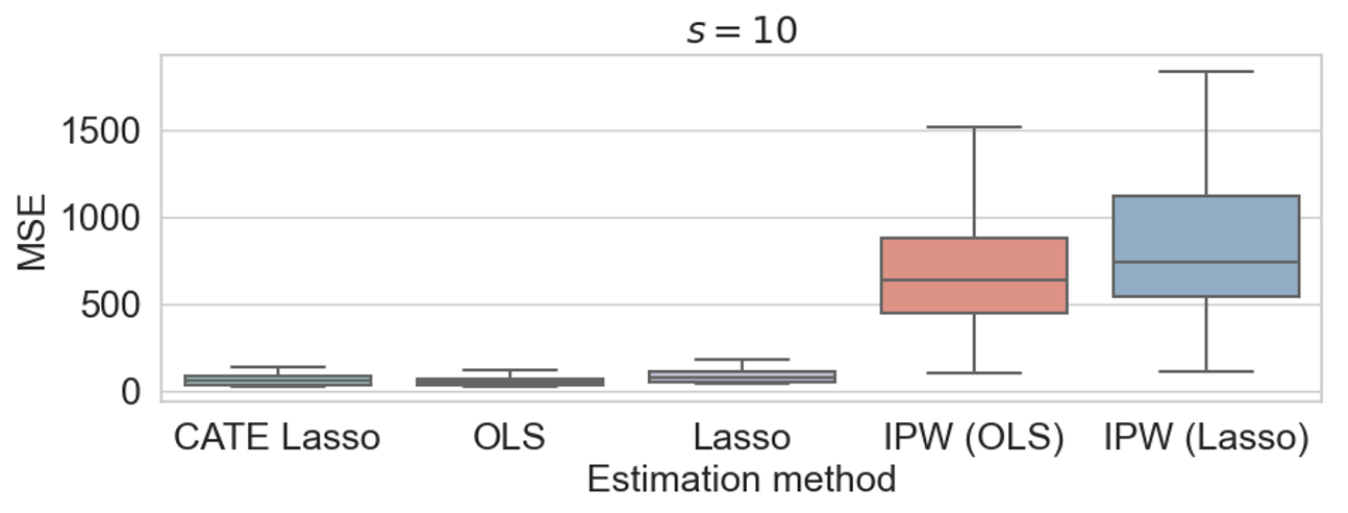}
    \includegraphics[width=140mm]{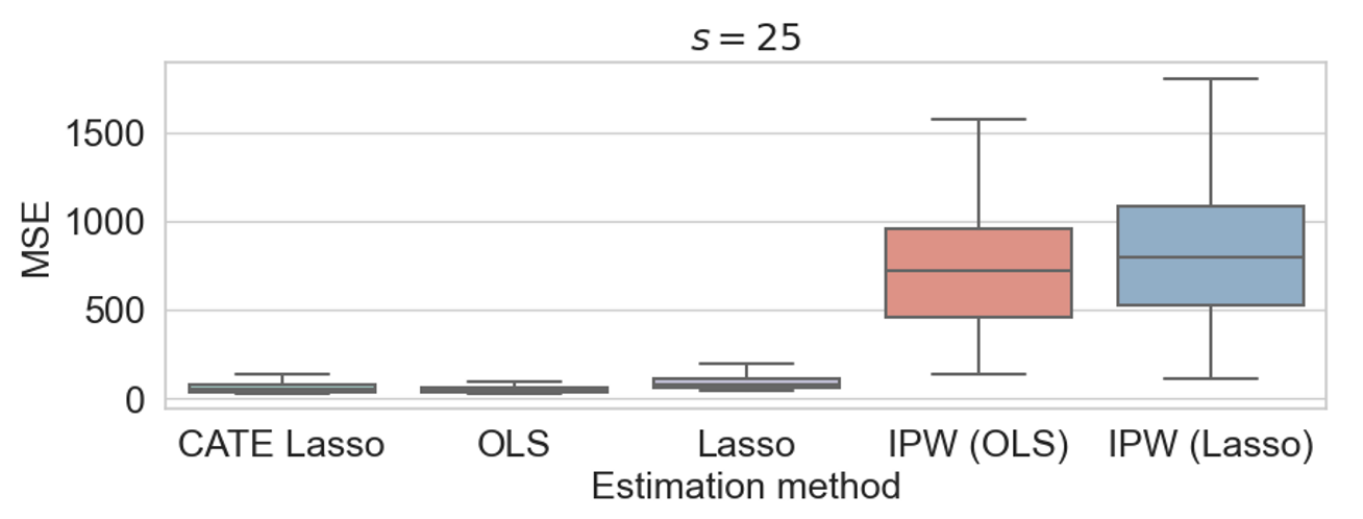}
    \includegraphics[width=140mm]{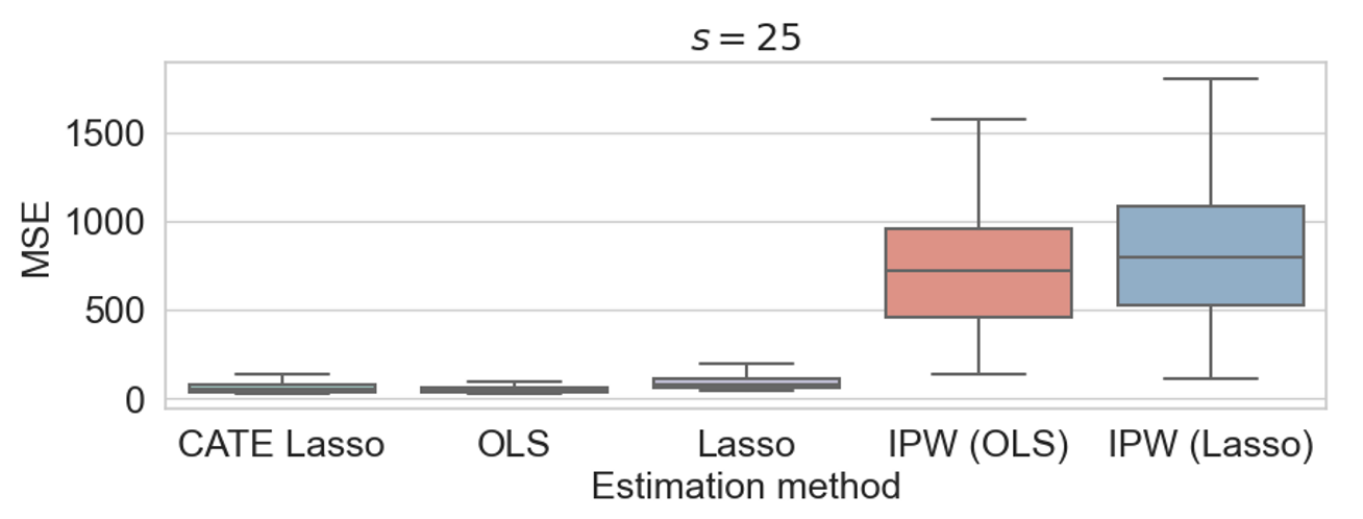}
\caption{RMSE of simulation studies.}
\label{fig:exp3}
\vspace{-5mm}
\end{figure*}

\begin{figure*}[h]
  \centering
  \vspace{-3mm}
    \includegraphics[width=140mm]{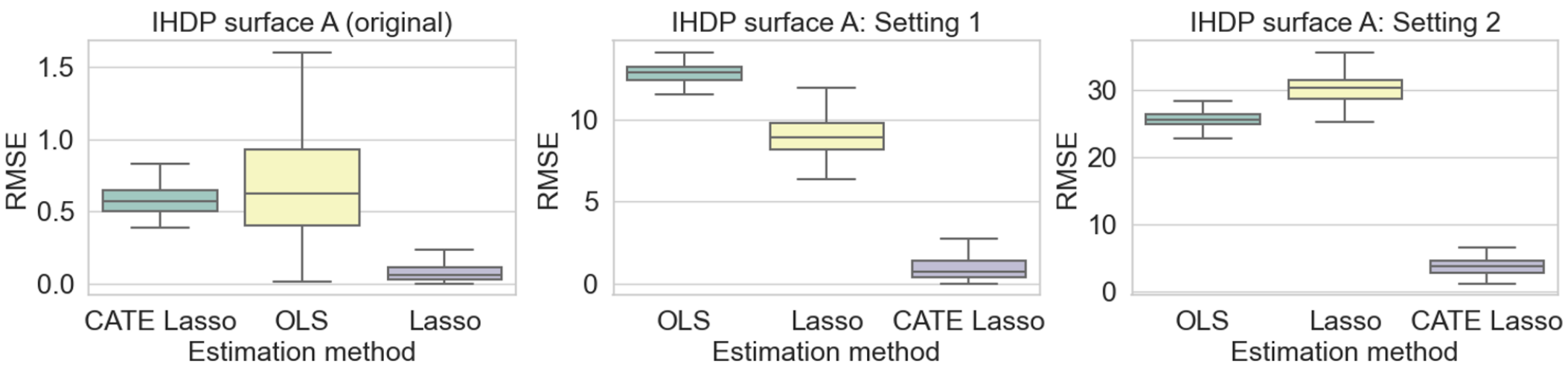}
    \caption{RMSE of experiments uing the IHDP dataset.}
\label{fig:exp5}
\end{figure*}

\section{Additional Experimental Results}
\label{appdx:exp}
This section provides additional experimental results to confirm the soundness of the CATE Lasso. In all methods based on the Lasso, we set the regularizer as $\lambda = 1.0$.

\subsection{Experiments with the IPW-Learner}
First, using the same settings in Section~\ref{sec:exp}, we investigate the performances of the IPW-Learner. Given the known propensity score $p(D_i = 1| X_i)$, we compute $\widetilde{Y}_i = \frac{\mathbbm{1}[D_i = 1]Y_i}{p(D_i = 1| X_i)} - \frac{\mathbbm{1}[D_i = 0]Y_i}{1 - p(D_i = 1| X_i)}$ and regress $\widetilde{Y}_i$ on $X_i$. In regression, we perform both the OLS and the Lasso. We show the results in Figure~\ref{fig:exp3}.

As shown in the results, the IPW-Learner does not perform well. This result is because the inverse probability makes the regression unstable. 

\subsection{Experiments with a Semi-synthetic Datasets}
In evaluating algorithms for estimating the treatment effect, it is difficult to find real-world datasets that can be used for the evaluation. Following existing work, we use semi-synthetic datasets made from the Infant Health and Development Program (IHDP), which consists of simulated outcomes and covariate data from a real study. 

We follow a setting of simulation proposed by \citet{Hill2011}, called the surface A. In the setting of \citet{Hill2011}, $747$ samples with $6$ continuous covariates and $19$ binary covariates are used. \citet{Hill2011} generated the outcomes using the covariates artificially. \citet{Hill2011} considered two scenarios: response surface A and response surface B. This study only focuses on the surface A, where $Y^1_i$ and $Y^0_i$ are generated as follows:
\begin{align*}
&Y^1_i \sim \mathcal{N}(X^\top_i\bm{\beta}_A+4, 1),\\
&Y^0_i \sim \mathcal{N}(X^\top_i\bm{\beta}_A, 1),
\end{align*}
where  elements of $\bm{\beta}_A\in\mathbb{R}^{25}$ were randomly sampled from $(0, 1, 2, 3, 4)$ with probabilities $(0.5, 0.2, 0.15, 0.1, 0.05)$.  

In addition to the original surface A, because the original covariate $X_i$ is low dimensional, we conduct experiments with additional high-dimensional covariates. We generate $Y^1_i$ and $Y^0_i$ as follows:
\begin{align*}
&Y^1_i \sim \mathcal{N}(X^\top_i\bm{\beta}_A + \widetilde{X}^\top_i\widetilde{\bm{\beta}}_A +4, 1),\\
&Y^0_i \sim \mathcal{N}(X^\top_i\bm{\beta}_A + \widetilde{X}^\top_i\widetilde{\bm{\beta}}_A, 1),
\end{align*}
where $X^\top_i\bm{\beta}_A$ is generated in the same way as the original surface A, and $Z_i$ is generated from a uniform distribution with a support $[-1, 1]$. We further consider two settings about $\widetilde{\bm{\beta}}_A$: we generate $\widetilde{\bm{\beta}}_A$ from a uniform distribution with a support $[0, 5]$ and $[0, 10]$ and refer to the settings as the \emph{setting} $1$ and \emph{setting} $2$.
 
We show the MSEs between of CATE estimation in the box-plot of Figure~\ref{fig:exp5}. We compute the MSEs by using the same $X_i$ used for estimating parameters but with the true CATE values for computing MSEs.

\end{document}